\documentclass[aps,prl,reprint]{revtex4-2}
\usepackage{amsthm, amsmath, physics, mathtools, amssymb}
\usepackage{bm, dsfont}
\usepackage{hyperref}

\usepackage{graphicx, float}

\newtheorem{theorem}{Theorem}
\newtheorem{lemma}{Lemma}
\newtheorem{definition}{Definition}

\begin{document}
	
\title{Differential hierarchy of the Husimi representation}

\author{Zacharie Van Herstraeten}
\email{zacharie.van-herstraeten@inria.fr}
\affiliation{DIENS, École Normale Supérieure, PSL  University, CNRS, INRIA (QAT), 45 rue d’Ulm, Paris, 75005, France}

\author{Ulysse Chabaud}
\email{ulysse.chabaud@inria.fr}
\affiliation{DIENS, École Normale Supérieure, PSL  University, CNRS, INRIA (QAT), 45 rue d’Ulm, Paris, 75005, France}

\begin{abstract}
Phase-space representations reveal the geometry of quantum mechanics and distinguish quantum features from classical ones. However, the positivity of a density operator is among the properties of quantum states that phase-space representations capture least directly: deciding whether a phase-space function describes a physical quantum state has so far required global conditions coupling distant points in phase space.
Here we show that these conditions can in fact be made local, by considering the Husimi representation in phase space.
We establish a complete set of conditions which captures when a phase-space function $f$ is the Husimi function of a quantum state, with each condition involving only the derivatives of the function $f$ evaluated at a single phase-space point. We further show that each condition has an equivalent algebraic characterization, which takes the form of the nonnegativity of a determinant of the underlying operator in a displaced Fock basis. As a result, these conditions obey a recurrence relation organizing them into a differential hierarchy, which terminates based on the rank of the operator. Our results thus show that operator rank and positivity are local properties in phase space, reinforcing the role of phase-space representations as a geometric description of quantum mechanics.
\end{abstract}

\maketitle

\emph{Introduction.---}Phase space provides a common geometric language for classical mechanics and statistical physics. In Hamiltonian mechanics, the positions and momenta of a system specify a point in phase space, whose time evolution follows a Hamiltonian flow. 
In statistical mechanics, a statistical state is instead represented by a probability distribution over phase space, providing a framework in which thermodynamic behavior emerges from microscopic dynamics~\cite{Gibbs1902}.

Quantum mechanics also admits a phase-space formulation. 
The main obstacle to extending the classical picture is the noncommutativity of position and momentum, which prevents a unique correspondence between operators and functions on phase space: different operator orderings give rise to distinct representations of the same density operator~\cite{CahillGlauber1969b}. 
The Wigner function, the Glauber--Sudarshan $P$ function, and the Husimi $Q$ function are the most common phase-space quasiprobability distributions~\cite{Wigner1932,Moyal1949,Husimi1940,Glauber1963-eo,Sudarshan1963-rg}. 
These representations are not ordinary probabilities: the Wigner function can take negative values, while the $P$ function can be highly singular. Nevertheless, each is informationally complete and uniquely specifies the underlying quantum state~\cite{CahillGlauber1969b,Hillery1984}.

These phase-space representations also capture essential properties of quantum states.
Negativities and singularities provide common signatures of nonclassicality~\cite{Titulaer1965,Mandel1986,Hudson1974,Kenfack2004-cr,SperlingVogel2020}, while the optical equivalence theorem and moment hierarchies connect phase-space data to squeezing, antibunching, and entanglement~\cite{Glauber1963-eo, Sudarshan1963-rg, Shchukin2005-mt, Shchukin2005-rr,Ryl2015-za}. 
Moreover, phase-space inequalities and matrices of distributions can certify nonclassicality even when the distributions involved remain nonnegative~\cite{Bohmann2020-xq,Bohmann2020-wo}.
Beyond the distribution itself, its derivatives also carry information: local derivatives encode moments and operator matrix entries~\cite{Wunsche1991-yv,Wunsche1996-ya}, and have recently been organized into continuous-variable entanglement criteria evaluated at individual phase-space points~\cite{Callus2026-wl}, so that quantum properties are encoded not only in a distribution's profile, but in its local differential structure.

On the other hand, not all properties are easy to identify in phase space.
Perhaps the most fundamental one, the positivity of the density operator, is famously hard to capture with phase-space distributions.
Exact criteria exist, known as the KLM conditions~\cite{Kastler1965,LoupiasMiracleSole1966,LoupiasMiracleSole1967}, which characterize valid Wigner functions through constraints on their Fourier transforms~\cite{NarcowichOConnell1986}.
These constraints are global: they couple the values of the distribution at arbitrarily distant points in phase space.
Countable families of positivity tests have since been derived on phase-space lattices~\cite{Cordero2019-aj}, but they retain this nonlocal character.

In this work, we approach this problem from a different angle, by studying the validity of a phase-space distribution through the lens of its local derivatives. To do so, we focus on the Husimi representation and show that whether a phase-space distribution is the Husimi function of a genuine quantum state can be characterized solely from local differential data.

The route to our differential criterion begins with an algebraic construction. 
For an operator $\hat O$ and $n\ge0$, we consider the displaced Fock minors (Definition~\ref{def:displaced_fock_minors}):
\begin{align}
M_n(\alpha)=
\det\!\left[
\bra{i}\hat D^\dagger(\alpha)\hat O\hat D(\alpha)\ket{j}
\right]_{0\leq i,j<n},
\end{align}
where $\ket i,\ket j$ are Fock states and $\hat D(\alpha)$ is a displacement operator of amplitude $\alpha\in\mathbb C$.  
We show that the resulting hierarchy of determinants characterizes both the rank and the positive semidefiniteness of the operator as follows: the last non-vanishing minor determines the operator rank (Theorem~\ref{th:rank}), while a Hermitian operator is positive semidefinite if and only if every such minor is nonnegative at every displacement amplitude $\alpha$ (Theorem~\ref{th:psd}). 

The second step provides the differential counterpart to this algebraic characterization.
In the Husimi representation, the derivatives of $Q$ at a phase-space point encode the matrix entries of the operator displaced to that point, so that each displaced Fock minor is a finite combination of local derivatives of $Q$, a property specific to the Husimi representation.
This defines a nonlinear differential operator $\mathcal M_n$ acting on phase-space functions, with $M_n=\mathcal M_n[Q]$.
Our main result follows (Theorem~\ref{th:admissibility}): together with regularity, reality, and normalization, a function $Q$ is the Husimi function of a quantum state if and only if $\mathcal M_n[Q]\geq0$ throughout phase space for every $n$.
Positivity of the density operator, a global property, is thereby reduced to a family of local differential inequalities, identical at every point of phase space.

Finally, we uncover the mechanism that ties each of these differential operators together.
Successive Fock minors obey a recurrence relation (Theorem~\ref{th:recurrence}), which determines each of them from the two preceding ones,
\begin{align}
M_{n+1}
=
\frac{M_n^{2}}{M_{n-1}}
\left(
1+\frac{1}{2n}\Delta\ln M_n
\right),
\label{eq:intro_hierarchy}
\end{align}
where $\Delta$ is the phase-space Laplacian.
The same relation carries over to the differential operators $\mathcal M_n$.
Since the hierarchy starts with $\mathcal M_0[Q]=1$ and $\mathcal M_1[Q]=Q$, these two initial terms suffice to launch the recurrence, which then generates the entire hierarchy from $Q$ alone.

Taken together, our results do not merely show that the rank of an operator, its positivity, and ultimately whether it is a quantum state at all can be read from the local derivatives of its Husimi function: they show that these readings form a nested sequence of conditions generated recursively from $Q$, which we call the \emph{differential hierarchy of the Husimi representation}.

\medskip
\emph{Algebraic characterization.---}
We consider a single bosonic mode with Hilbert space $\mathcal{H}$, Fock basis $\{\ket n\}_{n\in\mathbb N}$, and annihilation operator $\hat a=(\hat x+i\hat p)/\sqrt{2}$, where $[\hat x,\hat p]=i$.
Phase-space points are labeled by $\alpha=(x+ip)/\sqrt{2}$, and phase-space translations are implemented by the displacement operators $\hat D(\alpha)=\exp(\alpha\hat a^\dagger-\alpha^*\hat a)$~\cite{Adesso2014-pl}.

The Fock-basis matrix entries of the displaced operator $\hat D^\dag(\alpha)\hat O\hat D(\alpha)$ depend on $\alpha$. 
To control this dependence, we work within the following class of operators.

\begin{definition}[Bargmann-entire operator]
\label{def:bargmann_entire}
An operator $\hat O$ is Bargmann-entire if its Bargmann kernel
\begin{align}
    K(z,w)
    \coloneqq
    \sum_{m,n\geq0}
    \frac{\langle m|\hat O|n\rangle}{\sqrt{m!n!}}\;
    z^m\,
    w^n
\end{align}
is jointly entire in the two independent complex variables $z$ and $w$.
\end{definition}

This is a growth condition on the Fock-matrix entries: it is satisfied by every bounded operator~\cite{Berezin1974}, hence by every density operator, and by polynomials in $\hat x$ and $\hat p$ such as the number operator.
It ensures that the displaced matrix entries $\bra i\hat D^\dagger(\alpha)\hat O\hat D(\alpha)\ket j$ are well defined and real-analytic in $\alpha$, i.e., locally given by convergent power series in $\alpha$ and $\alpha^*$ (Appendix~\ref{app:bargmann-entire-forms}, Lemma~\ref{lem:displaced-elements-kernel}).
Throughout, an operator is specified by its Fock matrix, and displaced matrix entries, rank, and positivity are understood in the generalized sense of Appendix~\ref{app:bargmann-entire-forms}, which reduces to the usual one for bounded operators.

Although unitary displacement preserves rank and positivity, the Fock-basis matrix entries of $\hat D^\dagger(\alpha)\hat O\hat D(\alpha)$ vary with $\alpha$. 
We use this variation to probe rank and positivity through a fixed $n\times n$ Fock-basis submatrix.

\begin{definition}[Displaced Fock minors]
\label{def:displaced_fock_minors}
Given an operator $\hat{O}$, we define for every $n\geq 1$ and $\alpha\in\mathbb C$
\begin{align}
    M_n[\hat O]:
    \alpha\mapsto
    \det\!\left[
        \bra{i}
        \hat D^\dagger(\alpha)\hat O\hat D(\alpha)
        \ket{j}
    \right]_{0\leq i,j<n},
    \label{eq:displaced_fock_minors}
\end{align}
together with the convention $M_0(\alpha)\coloneqq1$.
\end{definition}

For brevity, we omit the dependence on $\hat O$ and call $\{M_n\}_{n\geq0}$ the \textit{Fock-minor hierarchy}. 
The next two theorems show that this single sequence, followed throughout phase space, characterizes both rank and positivity.

\begin{theorem}[Displaced Fock minors characterize the rank]
\label{th:rank}
Let $\hat O$ be a Bargmann-entire operator with Fock minors $\lbrace M_n\rbrace$. 
The following statements are equivalent:
\begin{enumerate}
    \item[$\mathrm{(i)}$] $\operatorname{rank}(\hat O)\geq n$;
    \item[$\mathrm{(ii)}$] there exists $\alpha\in\mathbb C$ such that $M_{n}(\alpha)\neq0$;
    \item[$\mathrm{(iii)}$] $M_{n}(\alpha)\neq0$ for almost every $\alpha\in\mathbb C$.
\end{enumerate}
\end{theorem}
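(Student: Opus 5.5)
The plan is to prove the cycle (iii)$\Rightarrow$(ii)$\Rightarrow$(i)$\Rightarrow$(iii). The first implication is trivial. For (ii)$\Rightarrow$(i): the matrix $[\bra i\hat D^\dagger(\alpha)\hat O\hat D(\alpha)\ket j]_{0\le i,j<n}$ equals $P_n\hat D^\dagger(\alpha)\hat O\hat D(\alpha)P_n$ restricted to $\mathrm{span}\{\ket 0,\dots,\ket{n-1}\}$, where $P_n$ is the Fock projector. Its rank is at most the rank of $\hat D^\dagger(\alpha)\hat O\hat D(\alpha)$, and displacement is invertible, so this equals $\operatorname{rank}\hat O$. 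Hence $M_n(\alpha)\neq0$ forces $\operatorname{rank}\hat O\ge n$. In the generalized (Bargmann-form) sense of Appendix~\ref{app:bargmann-entire-forms}, I would phrase rank as the maximal size of a nonsingular compression $[\langle\phi_i|\hat O|\psi_j\rangle]$ with $\phi_i,\psi_j$ drawn from the admissible domain (finite combinations of coherent or Fock states). The inequality then follows because the displaced Fock states $\hat D(\alpha)\ket i$ are admissible vectors.

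For (i)$\Rightarrow$(iii), I use analyticity. By Lemma~\ref{lem:displaced-elements-kernel}, every displaced entry is real-analytic in $(x,p)$, so $M_n$ is a real-analytic function on $\mathbb R^2$. A real-analytic function that is not identically zero vanishes only on a set of measure zero. It therefore suffices to show that $M_n\not\equiv0$.

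The main obstacle is proving $M_n\not\equiv 0$ whenever $\operatorname{rank}\hat O\ge n$. My approach is to treat $\alpha$ and $\alpha^*$ as independent complex variables $(\beta,\gamma)$, which the Bargmann kernel permits, since $\bra{i}\hat D^\dagger\hat O\hat D\ket j$ is built from $\partial_z^i\partial_w^j$ of $e^{-\gamma\beta}K(z+\gamma,w+\beta)$-type expressions evaluated at $z=w=0$, up to normalization. This complexifies $M_n$ into a holomorphic function of $(\beta,\gamma)$, which vanishes identically iff its real restriction does. Using $\hat D(\alpha)\ket 0=e^{-|\alpha|^2/2}e^{\alpha\hat a^\dagger}\ket0$, the $n\times n$ displaced Fock block is related by a triangular change of basis to the matrix of derivatives $[\partial_{\gamma}^i\partial_\beta^j \tilde K(\gamma,\beta)]_{i,j<n}$, with $\tilde K(\gamma,\beta)=\langle\gamma^*|\hat O|\beta\rangle$ the unnormalized coherent-state kernel. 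Up to nonvanishing factors, $M_n$ thus becomes a Wronskian-type determinant $\det[\partial_\gamma^i\partial_\beta^j\tilde K]$.

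Rank $\ge n$ means that $\tilde K(\gamma,\beta)$, viewed as a kernel in two variables, contains at least $n$ linearly independent "separable directions". Concretely, one can choose points with $\det[\tilde K(\gamma_k,\beta_l)]_{k,l<n}\neq0$, because the unnormalized coherent states span a dense set. I would then use the classical fact that if $\det[\partial_\gamma^i\partial_\beta^j\tilde K]$ vanished identically, then at every point the Taylor data of order $<n$ would be degenerate. By a confluent-limit/Wronskian argument, one should then deduce that all $n\times n$ minors $\det[\tilde K(\gamma_k,\beta_l)]$ vanish. This argument is the delicate step: two-variable Wronskians do not straightforwardly detect linear dependence. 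If it resists, I would instead argue by induction on $n$. Pick a point where $M_{n-1}\neq0$ and take the Schur complement. The remaining $(1,1)$ entry is a holomorphic kernel of rank $\ge1$ (since rank $\ge n$). Show that its "diagonal" derivative $\partial_\gamma^{n-1}\partial_\beta^{n-1}$ cannot vanish identically, because it expands as a nonzero holomorphic kernel whose Taylor coefficients in a suitable shifted Fock basis are nonzero. This is essentially the recurrence mechanism of Theorem~\ref{th:recurrence} read backwards.
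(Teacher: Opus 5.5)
Your reductions use the same ingredients as the paper. (iii)$\Rightarrow$(ii) is trivial. Real-analyticity reduces (i)$\Rightarrow$(iii) to showing $M_n\not\equiv0$. Treating $\alpha^*,\alpha$ as independent variables identifies $M_n(\alpha)$, up to the nonvanishing factor $e^{n|\alpha|^2}/\prod_{k<n}k!$, with the bi-directional Wronskian $\tau_n[K](z,w)=\det[\partial_z^i\partial_w^jK(z,w)]_{i,j<n}$ on the totally real plane $(z,w)=(\alpha^*,\alpha)$. An entire function vanishing on that plane vanishes identically.

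Your justification of (ii)$\Rightarrow$(i) needs a small repair. $\hat D(\alpha)\ket{i}$ is neither a finite Fock nor a finite coherent-state superposition, so it is not ``admissible'' in your sense. Instead, the displaced $n\times n$ block is the entrywise limit of $T_L^\dagger O_LT_L$, where $O_L$ is the leading $L\times L$ Fock block and $T_L$ the $L\times n$ block of $\hat D(\alpha)$. Each such product has rank at most $\operatorname{rank}\hat O$, and determinants are continuous.

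The genuine gap is the step that carries the theorem, $\operatorname{rank}\hat O\geq n\Rightarrow\tau_n[K]\not\equiv0$, which you leave open.
\begin{itemize}
\item Your confluent-limit idea runs in the wrong direction. Applying $\prod_k\partial_{z_k}^{k}\prod_l\partial_{w_l}^{l}$ to $\det[K(z_k,w_l)]_{k,l<n}$ and coalescing the points shows that identically vanishing point minors force $\tau_n\equiv0$. That is the easy implication. Conversely, $\tau_n\equiv0$ only kills the leading term of the confluent expansion and says nothing about minors at separated points.
\item The Schur-complement fallback is not well formed. At a fixed point the Schur complement is a scalar, and letting the point vary destroys the separable structure of the kernel.
\item Reading the recurrence of Theorem~\ref{th:recurrence} backwards (in its $\tau$-form) only says that $\tau_{n+1}\equiv0$ amounts to $\partial_z\partial_w\ln\tau_n=0$ off the zeros of $\tau_{n-1}\tau_n$. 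That is, $\tau_n$ locally factorizes as $a(z)b(w)$. Deducing $\operatorname{rank}\hat O\leq n$ from this is no easier than the original claim.
\end{itemize}

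The missing idea is to exploit the separate-variable structure of $K$ and apply the \emph{one-variable} Wronskian criterion twice, one variable at a time.
\begin{enumerate}
\item The columns of the Fock matrix correspond to the entire functions $f_q(z)=\partial_w^qK(z,0)/\sqrt{q!}=\sum_pO_{pq}z^p/\sqrt{p!}$. So $\operatorname{rank}\hat O\geq n$ gives linearly independent $f_{q_1},\ldots,f_{q_n}$, hence a point $z_0$ with $\det[\partial_z^if_{q_j}(z_0)]\neq0$.
\item The functions $h_i(w)=\partial_z^iK(z_0,w)$, $0\leq i<n$, have normalized Taylor coefficients $\partial_w^qh_i(0)/\sqrt{q!}=\partial_z^if_q(z_0)$. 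Their submatrix on the columns $q_1,\ldots,q_n$ is nonsingular, so the $h_i$ are linearly independent.
\item The Wronskian criterion in $w$ then yields $w_0$ with $\tau_n[K](z_0,w_0)=\det[\partial_w^jh_i(w_0)]\neq0$.
\end{enumerate}
This is the paper's Lemma~\ref{lem:bargmann_biwronskian_rank}. With it in hand, your analyticity argument completes (iii).
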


We sketch the proof, detailed in Appendix~\ref{app:rank-psd-theorems}.
A nonzero $n\times n$ minor immediately implies $\operatorname{rank}(\hat O)\geq n$.
Conversely, Appendix~\ref{app:wronskian_representation} expresses $M_n(\alpha)$, up to a positive factor, as a Wronskian determinant of the Bargmann kernel $K$ evaluated at $(z,w)=(\alpha^*,\alpha)$.
Just as the Wronskian of $n$ holomorphic functions vanishes identically only if they are linearly dependent~\cite{Bostan2010}, $M_n$ vanishes identically only if the Fock matrix of $\hat O$ has fewer than $n$ linearly independent columns, i.e., $\operatorname{rank}(\hat O)<n$.
Since $M_n$ is real-analytic, it otherwise vanishes only on a set of measure zero~\cite{Mityagin2020}, which yields $\mathrm{(iii)}$.
In particular, a finite-rank operator has rank $r$ precisely when $M_r\not\equiv0$ and $M_{r+1}\equiv0$.

\begin{theorem}[Displaced Fock minors characterize  positive semidefiniteness]
\label{th:psd}
A Hermitian Bargmann-entire operator $\hat O$ is positive semidefinite if and only if all its displaced Fock minors are nonnegative:
\begin{align}
    \hat O\succeq0
    \quad\Leftrightarrow\quad
    M_{n}(\alpha)\geq0
    \qquad
    \forall\,\alpha\in\mathbb C,\ 
    \forall\,n\in\mathbb{N}.
    \label{eq:psd_fock_minors}
\end{align}
\end{theorem}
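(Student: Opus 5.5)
The forward direction is immediate, and the converse will be proved by contradiction: a failure of positivity forces some displaced minor to be negative.

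\emph{Necessity.} If $\hat O\succeq0$, then $\hat D^\dagger(\alpha)\hat O\hat D(\alpha)\succeq0$ for every $\alpha$, since unitary conjugation preserves positivity in the generalized sense of Appendix~\ref{app:bargmann-entire-forms}. Hence every finite principal submatrix $[\bra i\hat D^\dagger(\alpha)\hat O\hat D(\alpha)\ket j]_{0\le i,j<n}$ is a positive semidefinite Hermitian matrix. Its determinant $M_n(\alpha)$ is therefore nonnegative.

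\emph{Sufficiency.} Assume $M_n(\alpha)\ge0$ for all $n$ and $\alpha$, and suppose $\hat O\not\succeq0$. The plan is to produce an $n$ and an $\alpha$ with $M_n(\alpha)<0$.

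Since $\hat O$ is not positive semidefinite, there is a finite Fock-space vector $v$ with $\bra v\hat O\ket v<0$. Let $N$ be large enough that $v$ lies in the span of $\ket0,\dots,\ket{N-1}$. Let $A(\alpha)$ denote the $N\times N$ leading block of $\hat D^\dagger(\alpha)\hat O\hat D(\alpha)$. At $\alpha=0$ the block $A(0)$ is a Hermitian matrix with at least one negative eigenvalue.

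I will use Sylvester-type reasoning on the nested leading principal minors $M_1,\dots,M_N$ of this block. If all of them are strictly positive, then $A$ is positive definite, which is a contradiction. Under the hypothesis $M_k\ge0$, a contradiction therefore needs some $M_k$ to be strictly negative. The difficulty is that vanishing leading minors defeat Sylvester's criterion: nonnegative leading minors alone do not imply positive semidefiniteness, as the $2\times 2$ matrix $\mathrm{diag}(0,-1)$ shows.

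Genericity in $\alpha$ is the tool for handling degeneracies. Each $M_k$ is real-analytic in $\alpha$, so by Theorem~\ref{th:rank} it is either identically zero or nonzero almost everywhere. Let $r=\operatorname{rank}\hat O$, which may be infinite. For $k\le r$, $M_k\ne0$ almost everywhere, so there is a full-measure set of $\alpha$ on which $M_1,\dots,M_{\min(N,r)}$ are all nonzero, hence all strictly positive by assumption.

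The argument now splits on the rank.

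\emph{Case $r\ge N$.} Pick such an $\alpha$. All leading minors of $A(\alpha)$ are positive, so $A(\alpha)\succ0$ by Sylvester. I then need to transfer this positivity to the fixed vector $v$. Because $\hat D(\alpha)\hat D^\dagger(\alpha)$ acts as the identity, I will take $\alpha\to0$ along the full-measure set. Continuity of $A(\alpha)$ in $\alpha$ gives $A(0)\succeq0$ as a limit of positive definite matrices. Then $\bra v\hat O\ket v=v^\dagger A(0)v\ge0$, a contradiction.

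\emph{Case $r<N$.} Pick $\alpha$ with $M_r(\alpha)>0$. Restricted to the span of $\ket0,\dots,\ket{N-1}$, the operator $\hat D^\dagger(\alpha)\hat O\hat D(\alpha)$ has rank at most $r$, while its leading $r\times r$ block $B$ is invertible. Every principal minor of size greater than $r$ vanishes. The compression $A(\alpha)$ therefore equals its Schur-complement completion, which has the form $A=X^\dagger B X$ for an appropriate $X$. Since $B\succ0$ by Sylvester applied to $M_1,\dots,M_r>0$, this gives $A(\alpha)\succeq0$. Then take $\alpha\to0$ as in the first case.

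The main obstacle is this rank-deficient step. One must justify that the $N\times N$ compression inherits rank $\le r$, which is immediate since it is a submatrix of a rank-$r$ matrix. One must also justify that an invertible leading $r\times r$ block forces the representation $A=X^\dagger BX$: write $A=\begin{pmatrix}B&C\\C^\dagger&E\end{pmatrix}$, and note that rank $r$ forces $E=C^\dagger B^{-1}C$. The remaining point is to use the generalized Appendix~\ref{app:bargmann-entire-forms} notions so that positivity on all finite vectors, obtained by letting $N\to\infty$, is what $\hat O\succeq0$ means for unbounded Bargmann-entire operators.
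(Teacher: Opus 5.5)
Your proof is correct and is essentially the paper's argument. Theorem~\ref{th:rank} gives a full-measure, hence dense, set of $\alpha$ on which $M_1,\dots,M_{\min(N,r)}>0$; Sylvester's criterion handles the full-rank blocks; a rank-$\le r$ argument handles blocks larger than a finite rank $r$; and continuity of the displaced blocks brings positivity back to $\alpha=0$. The only differences are your contradiction framing and your Schur-complement factorization $A=X^\dagger BX$ (via $E=C^\dagger B^{-1}C$) where the paper uses Cauchy interlacing, which is an equally valid substitute.

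Two points you treat as immediate do need a short argument for a generalized Fock matrix: that displacement preserves positivity, and that the displaced compression has rank at most $r$. The paper gets both by writing $B_N(\alpha)=\lim_{L\to\infty}T_{L,N}(\alpha)^\dagger O_L\,T_{L,N}(\alpha)$ entrywise, which the absolute convergence in Lemma~\ref{lem:displaced-elements} justifies, and noting that both properties survive this limit.
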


We sketch the proof, given in Appendix~\ref{app:rank-psd-theorems}.
The forward implication is immediate: principal submatrices of a positive semidefinite operator are positive semidefinite.
Conversely, suppose that all minors are nonnegative, and let $r=\operatorname{rank}(\hat O)$.
By Theorem~\ref{th:rank} $\mathrm{(iii)}$, there is a displacement amplitude $\alpha_0$ at which $M_1(\alpha_0),\ldots,M_r(\alpha_0)$ are nonzero, hence strictly positive, and Sylvester's criterion makes the leading blocks of $\hat D^\dagger(\alpha_0)\,\hat{O}\,\hat D(\alpha_0)$ of order $n\leq r$ positive definite.
If $r=\infty$, this covers every block.
If $r<\infty$, a block of order $n>r$ has rank at most $r$ yet contains a positive definite block of order $r$: by Cauchy interlacing, at least $r$ of its eigenvalues are positive, which leaves no room for a negative one, and the block is positive semidefinite.
Hence $\hat D^\dagger(\alpha_0)\hat O\hat D(\alpha_0)\succeq0$, and therefore $\hat O\succeq0$.

Theorem~\ref{th:psd} should be compared with Sylvester's criterion~\cite[Sec.~7.2]{Horn2012-se}: a Hermitian $N\times N$ matrix is positive definite if and only if its $N$ leading principal minors are positive, but positive semidefiniteness requires all $2^N-1$ principal minors to be nonnegative.
Phase-space positivity criteria based on reconstructed Fock matrices involve this exponentially large family at each truncation level~\cite{Nha2008}. Instead, Theorem~\ref{th:psd} involves only the leading minors, one per order, but at every possible displacement amplitude.

For a density operator $\hat\rho$, the minors are moreover nonincreasing in $n$ and bounded by the spectrum of the state (Lemma~\ref{lem:fock_minor_bounds} in Appendix~\ref{app:rank-psd-theorems}): for all $\alpha\in\mathbb C$,
\begin{align}
    1=M_0(\alpha)
    \geq M_1(\alpha)
    \geq M_2(\alpha)
    \geq\cdots\geq0,
    \label{eq:fock_minor_monotonicity}
\end{align}
and $M_n(\alpha)\leq\prod_{k=1}^{n}\lambda_k^\downarrow(\hat\rho)$, where $\smash{\lambda_k^\downarrow(\hat\rho)}$ denotes the $k$th largest eigenvalue of $\hat\rho$.
The magnitude of each minor is thus directly constrained by the spectrum of the state.

\medskip

\emph{Differential characterization.---}
Having shown that the Fock-minor hierarchy characterizes rank and positivity, we now determine how this hierarchy is encoded in local phase-space derivatives.
The Husimi function of $\hat O$ is $Q_{\hat O}(\alpha)=\bra{\alpha}\hat O\ket{\alpha}$, where $\ket{\alpha}=\hat D(\alpha)\ket 0$; we omit the customary factor $1/\pi$, so that normalization reads $\smash{\int_{\mathbb C}\frac{\mathrm d^2\alpha}{\pi}\,Q_{\hat\rho}(\alpha)=1}$ for a state $\hat\rho$.
We use the Wirtinger derivatives~\cite{Wirtinger1927-fl} $\partial_\alpha=(\partial_x-i\partial_p)/\sqrt{2}$ and $\partial_{\alpha^*}=(\partial_x+i\partial_p)/\sqrt{2}$, which treat $\alpha$ and $\alpha^*$ as independent variables, so that the phase-space Laplacian is $\Delta=2\partial_\alpha\partial_{\alpha^*}$.

The key fact is that the derivatives of the Husimi function at the origin determine the Fock matrix of $\hat O$~\cite{Wunsche1991-yv,Wunsche1996-ya}.
Since displacing the operator translates its Husimi function, the derivatives at a point $\alpha$ likewise determine the Fock matrix of the displaced operator.
We condense this correspondence into a single operator-valued identity.

\begin{lemma}[Husimi differential reconstruction]
\label{lem:husimi-reconstruction}
Let $\hat{O}$ be a Bargmann-entire operator with Husimi function $Q_{\hat{O}}$.
For every $\alpha\in\mathbb{C}$,
\begin{align}
    \hat D^\dagger(\alpha)\,\hat{O}\,\hat D(\alpha)
    =
    e^{\partial_{\alpha^*}\hat a^\dagger}\,
    e^{\partial_\alpha\hat a}\;
    Q_{\hat O}(\alpha).
    \label{eq:husimi_reconstruction}
\end{align}
\end{lemma}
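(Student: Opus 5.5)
The right-hand side of \eqref{eq:husimi_reconstruction} is read as the operator whose Fock matrix entries are
\[
\bra i e^{\partial_{\alpha^*}\hat a^\dagger}e^{\partial_\alpha\hat a}\ket j\,Q_{\hat O}(\alpha)
=\sum_{k\le\min(i,j)}\frac{\sqrt{i!\,j!}}{k!\,(i-k)!\,(j-k)!}\,\bra k\ldots
\]
More directly, I would expand both exponentials in the Fock basis and compare matrix elements with those of the displaced operator $\hat O_\alpha\coloneqq\hat D^\dagger(\alpha)\hat O\hat D(\alpha)$. The derivatives $\partial_\alpha,\partial_{\alpha^*}$ commute with each other and with the ladder operators, so the right-hand side is an operator-valued formal series in the derivatives of $Q_{\hat O}$ at the point $\alpha$. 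The plan has three steps. First, reduce to $\alpha=0$ by translation covariance. Second, prove the identity at the origin through the Bargmann kernel. Third, check convergence using the Bargmann-entire hypothesis.

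\emph{Step 1 (translation).} We have $Q_{\hat O}(\alpha+\beta)=\bra{\beta}\hat D^\dagger(\alpha)\hat O\hat D(\alpha)\ket{\beta}$, since $\hat D(\alpha)\hat D(\beta)=e^{i\,\mathrm{Im}(\alpha\beta^*)}\hat D(\alpha+\beta)$ and the phase cancels in the expectation value. Hence $Q_{\hat O}(\alpha+\beta)=Q_{\hat O_\alpha}(\beta)$. Derivatives in $\alpha$ of $Q_{\hat O}$ evaluated at $\alpha$ therefore equal derivatives in $\beta$ of $Q_{\hat O_\alpha}$ evaluated at $\beta=0$. It thus suffices to prove $\hat O=e^{\partial_{\beta^*}\hat a^\dagger}e^{\partial_\beta\hat a}Q_{\hat O}(\beta)\big|_{\beta=0}$ for every Bargmann-entire $\hat O$. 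This requires knowing that $\hat O_\alpha$ is again Bargmann-entire, which is the content of Lemma~\ref{lem:displaced-elements-kernel} in the appendix.

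\emph{Step 2 (origin).} Since $\ket\beta=e^{-|\beta|^2/2}\sum_m \beta^m\ket m/\sqrt{m!}$, we get
\[
Q_{\hat O}(\beta)=e^{-\beta\beta^*}K(\beta^*,\beta).
\]
Treat $\beta,\beta^*$ as independent variables $w,z$. The Taylor coefficients of $K$ at the origin are $\bra m\hat O\ket n/\sqrt{m!n!}$, and $K(z,w)=e^{zw}\,Q(z,w)$, where $Q(z,w)$ denotes the holomorphic extension. Expanding gives
\[
\bra m\hat O\ket n=\frac{1}{\sqrt{m!\,n!}}\,\partial_z^m\partial_w^n\!\left[e^{zw}Q\right]_{0}
=\sum_{k}\frac{\sqrt{m!\,n!}}{k!\,(m-k)!\,(n-k)!}\,\partial_z^{m-k}\partial_w^{n-k}Q\big|_0 .
\]
On the other side, I would compute $\bra m (\hat a^\dagger)^p\hat a^q\ket n/(p!\,q!)$. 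It is nonzero only if $m-p=n-q=k\ge0$, in which case it equals $\sqrt{m!\,n!}/(k!\,p!\,q!)$. This is exactly the same coefficient, with $\partial_{\beta^*}^p\partial_\beta^q$ corresponding to $\partial_z^p\partial_w^q$. The identity follows entrywise, and each entry is a finite sum, so there is no convergence issue at the level of matrix elements.

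\emph{Main obstacle.} The delicate point is legitimizing the replacement of $\beta,\beta^*$ by independent variables, and making sense of the operator identity for unbounded $\hat O$. Because $K$ is jointly entire, $e^{-zw}K(z,w)$ is entire in $(z,w)$ and its restriction to $z=w^*$ is $Q_{\hat O}$. Wirtinger derivatives of a real-analytic function agree with the partial derivatives of its holomorphic extension on the totally real diagonal. Combined with the generalized Fock-matrix sense of Appendix~\ref{app:bargmann-entire-forms}, this suffices, because both sides are specified by their (finite-sum) Fock matrix entries.
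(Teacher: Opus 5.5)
Your proposal is correct and is essentially the paper's argument: both rest on the fact that the Bargmann kernel of the displaced operator is $e^{zw}$ times the holomorphic extension of $Q_{\hat O}$, translated to $\alpha$. Both then use the fact that the normally ordered monomials $(\hat a^\dagger)^p\hat a^q/(p!\,q!)$ produce exactly the same Taylor coefficients.

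The only difference is bookkeeping. The paper works at general $\alpha$ through the generating function $\langle z\rVert\hat D^\dagger(\alpha)\hat O\hat D(\alpha)\lVert w\rangle=e^{zw}e^{z\partial_{\alpha^*}}e^{w\partial_\alpha}Q_{\hat O}(\alpha)$. You instead first reduce to the origin via $Q_{\hat O}(\alpha+\beta)=Q_{\hat O_\alpha}(\beta)$ and then match coefficients explicitly by Leibniz. Since $\hat O$ may be unbounded, that reduction is best justified through the kernel formula of Lemma~\ref{lem:displaced-elements-kernel} rather than the displacement group law.
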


The proof is given in Appendix~\ref{app:reconstruction-formula}.
Since the exponentials are normally ordered, each Fock-basis entry of the right-hand side involves only finitely many derivatives of~$Q_{\hat O}$.
The formula extends formally to the $s$-parametrized quasiprobability distributions~\cite{CahillGlauber1969b}, but the Husimi function is the only member of the family for which finitely many derivatives suffice (Appendix~\ref{app:reconstruction-formula}).

Lemma~\ref{lem:husimi-reconstruction} turns the displaced Fock minors into differential operators acting on the Husimi function.
Taking the $n\times n$ leading determinant on its right-hand side defines the $n$th Fock-minor differential operator
\begin{align}
    \mathcal M_n[Q]:\alpha\mapsto
    \det\!\left[
    \bra{i}
    \left(
    e^{\partial_{\alpha^*}\hat a^\dagger}
    e^{\partial_\alpha\hat a}\,
    Q
    \right)
    \ket{j}
    \right]_{0\leq i,j<n},
    \label{eq:differential_fock_minor}
\end{align}
with $\mathcal M_0[Q]\equiv1$, where the differential expression acts on $Q$ before the determinant is taken.
Since only finitely many derivatives enter, $\mathcal M_n$ is defined for every smooth function $Q$ on $\mathbb C$, whether or not an underlying operator exists, and Lemma~\ref{lem:husimi-reconstruction} gives $M_n[\hat O]=\mathcal M_n[Q_{\hat O}]$ for every Bargmann-entire operator $\hat O$.
We call $\mathcal M_n[Q]$ the differential Fock minors of $Q$.
Theorem~\ref{th:psd} thereby becomes a differential criterion on phase-space functions.

\begin{theorem}[Differential characterization of valid Husimi functions]
\label{th:admissibility}
A function $Q:\mathbb C\to\mathbb C$ is the Husimi function
$Q(\alpha)=\bra{\alpha}\hat\rho\ket{\alpha}$ of a quantum state $\hat\rho$ if and only if the following conditions hold:
\begin{enumerate}
    \item[$\mathrm{(i)}$] \emph{Bargmann-entire:}
    there exists an entire function $K$ on $\mathbb C^2$ such that $Q(\alpha)=e^{-|\alpha|^2}K(\alpha^*,\alpha)$;

    \item[$\mathrm{(ii)}$] \emph{Real-valued:}
    $Q(\alpha)\in\mathbb R$ for every $\alpha\in\mathbb C$;

    \item[$\mathrm{(iii)}$] \emph{Normalized:}
    $\int_{\mathbb C}\frac{\mathrm d^2\alpha}\pi\,Q(\alpha)=1$;

    \item[$\mathrm{(iv)}$] \emph{Differentially positive:}
$\mathcal{M}_n[Q](\alpha)\geq0$ for every $\alpha\in\mathbb C$ and every $n\in\mathbb{N}$.
\end{enumerate}
\end{theorem}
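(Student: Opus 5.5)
The forward direction is a matter of checking the four conditions one by one. If $Q=Q_{\hat\rho}$ for a state $\hat\rho$, then $\hat\rho$ is bounded and hence Bargmann-entire. Expanding $\ket{\alpha}=e^{-|\alpha|^2/2}\sum_m \alpha^m\ket m/\sqrt{m!}$ gives
\begin{align}
\bra\alpha\hat\rho\ket\alpha=e^{-|\alpha|^2}K(\alpha^*,\alpha),
\end{align}
which is (i). Hermiticity of $\hat\rho$ gives (ii). For (iii), we use the resolution of identity $\int\frac{\mathrm d^2\alpha}{\pi}\ket\alpha\bra\alpha=\hat{\mathds 1}$ together with $\operatorname{tr}\hat\rho=1$. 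For (iv), Lemma~\ref{lem:husimi-reconstruction} gives $\mathcal M_n[Q]=M_n[\hat\rho]$, and the forward implication of Theorem~\ref{th:psd} then yields nonnegativity.

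For the converse, we start from $K$ in (i) and reconstruct an operator. Expanding $K(z,w)=\sum_{m,n}c_{mn}z^mw^n$, we define $\hat O$ through its Fock matrix $\bra m\hat O\ket n\coloneqq\sqrt{m!n!}\,c_{mn}$. By construction $\hat O$ is Bargmann-entire with kernel $K$, and the same coherent-state computation as above gives $Q_{\hat O}(\alpha)=e^{-|\alpha|^2}K(\alpha^*,\alpha)=Q(\alpha)$. One subtlety must be handled first: $K(z,w)$ is determined by its restriction to the anti-diagonal $(\alpha^*,\alpha)$. This follows because $\alpha$ and $\alpha^*$ are independent real-analytic coordinates, so a jointly entire function vanishing on $\{(\bar w,w)\}$ vanishes identically, as this set is a totally real subspace of maximal dimension. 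Hence $\hat O$ is uniquely determined by $Q$.

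Next we establish Hermiticity. Reality of $Q$ means $K(\alpha^*,\alpha)=\overline{K(\alpha^*,\alpha)}=\overline K(\alpha,\alpha^*)$, where $\overline K(z,w)\coloneqq\overline{K(\bar z,\bar w)}$ is again entire. The uniqueness argument just given then yields $K(z,w)=\overline K(w,z)$, that is, $c_{mn}=\overline{c_{nm}}$, so $\hat O$ is Hermitian in the generalized (Fock-matrix) sense. Condition (iv), combined with $\mathcal M_n[Q]=M_n[\hat O]$ from Lemma~\ref{lem:husimi-reconstruction}, lets us invoke Theorem~\ref{th:psd} to conclude $\hat O\succeq0$.

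The remaining and main obstacle is upgrading a positive semidefinite Bargmann-entire form with finite normalization to a genuine trace-class density operator. Positivity gives $\bra n\hat O\ket n\ge0$. Computing the normalization integral in polar coordinates kills all off-diagonal terms by angular integration and leaves
\begin{align}
\int\frac{\mathrm d^2\alpha}{\pi}Q(\alpha)=\sum_n\bra n\hat O\ket n\int_0^\infty e^{-r}\frac{r^n}{n!}\,\mathrm dr=\sum_n\bra n\hat O\ket n .
\end{align}
Since every term is nonnegative, the interchange of sum and integral is justified by monotone convergence (Tonelli). Hence the diagonal is summable with total $1$. A positive semidefinite Fock matrix with summable diagonal satisfies $|O_{mn}|^2\le O_{mm}O_{nn}$, so it defines a bounded, positive, trace-class operator with trace $1$. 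This identifies $\hat O=\hat\rho$ as a quantum state with Husimi function $Q$. The care needed here is in matching the generalized notion of positivity from Appendix~\ref{app:bargmann-entire-forms}, namely positivity of the form on finite Fock vectors, with this operator-theoretic conclusion.
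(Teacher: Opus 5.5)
Your proposal is correct and follows essentially the same route as the paper. Necessity is checked condition by condition. For sufficiency, you reconstruct the Fock matrix from the Taylor coefficients of $K$, get Hermiticity from the reality of $Q$ because an entire $K$ is determined by its values on $\{(\alpha^*,\alpha)\}$, get positivity from Lemma~\ref{lem:husimi-reconstruction} together with Theorem~\ref{th:psd}, and get unit trace from the polar-coordinate computation with Tonelli. Your explicit bound $|O_{mn}|^2\le O_{mm}O_{nn}$, which makes the matrix Hilbert--Schmidt and hence bounded before you conclude positivity and trace-class, spells out a step the paper only asserts.
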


Condition~$\mathrm{(i)}$ ensures that the local reconstructions obtained at different phase-space points are mutually consistent, i.e.\ they arise from a single Bargmann-entire operator whose Husimi function is $Q$.
Conditions~$\mathrm{(ii)}$ and $\mathrm{(iii)}$ encode Hermiticity and normalization, respectively.
The novel characterization is condition~$\mathrm{(iv)}$: through the reconstruction formula and Theorem~\ref{th:psd}, pointwise nonnegativity of the full Fock-minor differential hierarchy, as expressed in Eq.~\eqref{eq:differential_fock_minor}, is equivalent to positive semidefiniteness of the corresponding operator.
A proof of the full equivalence is given in Appendix~\ref{app:husimi-admissibility}.

At this stage, the differential Fock minors are defined one by one by Eq.~\eqref{eq:differential_fock_minor}, and computing them separately quickly becomes impractical as $n$ grows. 
Nonetheless, a structure already emerges at the lowest orders.
The first of them are $\mathcal M_0[Q]=1$, $\mathcal M_1[Q]=Q$, and $\mathcal M_2[Q]=Q^2+Q\,\partial_\alpha\partial_{\alpha^\ast}Q-(\partial_\alpha Q)(\partial_{\alpha^\ast}Q)$, which, wherever $Q\neq0$, can be rearranged as $\mathcal M_2[Q]=Q^2\left(1+\tfrac12\Delta\ln Q\right)$. 
Thus $\mathcal M_2$ is determined by $\mathcal M_1$ and its logarithmic curvature $\Delta\ln\mathcal M_1[Q]$, and the condition $\mathcal M_2[Q]\geq0$ reproduces the log-subharmonicity bound satisfied by physical Husimi functions~\cite{Frank2025-iu}.
This is the first instance of an exact recurrence, which relates each Fock minor to the two preceding ones.

\begin{theorem}[Differential recurrence of displaced Fock minors]
\label{th:recurrence}
Let $\hat O$ be a Bargmann-entire operator. 
For every $n\geq1$, the Fock minors satisfy
\begin{align}
    &M_n\,\partial_\alpha\partial_{\alpha^*}M_n
    -(\partial_\alpha M_n)(\partial_{\alpha^*}M_n)
    \nonumber\\
    &\qquad\qquad\qquad=
    n\left(M_{n+1}M_{n-1}-M_n^2\right).
    \label{eq:fock_minor_recurrence}
\end{align}
\end{theorem}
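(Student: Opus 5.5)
Our plan is to recognize the recurrence as the Sylvester (Desnanot--Jacobi) identity for a Wronskian-type matrix whose entries are generated by differentiation in $\alpha$ and $\alpha^*$. First, from Lemma~\ref{lem:husimi-reconstruction}, the displaced matrix entries are
\begin{align}
    A_{ij}(\alpha)\coloneqq\bra{i}\hat D^\dagger(\alpha)\hat O\hat D(\alpha)\ket{j}
    =\sum_{k}\frac{\sqrt{i!\,j!}}{k!\,(i-k)!\,(j-k)!}\,\partial_{\alpha^*}^{\,i-k}\partial_\alpha^{\,j-k}Q .
\end{align}
It is cleaner to write $Q=e^{-\alpha\alpha^*}F$ with $F(\alpha,\alpha^*)=K(\alpha^*,\alpha)$ holomorphic in the two independent variables. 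Then $e^{-|\alpha|^2}\partial_{\alpha^*}^i\partial_\alpha^j F/\sqrt{i!j!}$ should be related to $A_{ij}$ by an $\alpha$-dependent lower-triangular change of basis with unit diagonal on both sides, namely the displacement of the Fock states $\hat D(\alpha)\ket{i}$ expanded against $\hat a^{\dagger k}\ket{\alpha}$. Since unit-triangular transformations do not change leading minors, we expect
\begin{align}
    M_n=\frac{e^{-n|\alpha|^2}}{\prod_{k<n}k!^2}\,W_n,\qquad W_n\coloneqq\det\!\left[\partial_{\alpha^*}^{i}\partial_\alpha^{j}F\right]_{0\le i,j<n},
\end{align}
which is the Wronskian representation of Appendix~\ref{app:wronskian_representation}. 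We will verify this factor carefully, since the factor $n$ in Eq.~\eqref{eq:fock_minor_recurrence} comes from the $k!$ normalizations together with the exponential prefactor.

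Second, we use the classical identity for "bi-Wronskians". For the matrix $B=[\partial_{\alpha^*}^i\partial_\alpha^jF]_{0\le i,j\le n}$ of order $n+1$, the Desnanot--Jacobi identity gives
\begin{align}
    W_{n+1}W_{n-1}=W_n\,B^{(n,n)}_{\rm del}-B^{(n-1,n)}_{\rm del}B^{(n,n-1)}_{\rm del},
\end{align}
where $B^{(p,q)}_{\rm del}$ denotes the minor obtained by deleting the row indexed $p$ and the column indexed $q$, restricted appropriately. Concretely, the $n\times n$ minors involve rows $\{0,\dots,n-1\}$ or $\{0,\dots,n-2,n\}$, and similarly for columns. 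We then check that differentiating $W_n$ by $\partial_\alpha$ raises only the last column index, because differentiating any other column duplicates a column. Hence $\partial_\alpha W_n$ is the minor with columns $\{0,\dots,n-2,n\}$, $\partial_{\alpha^*}W_n$ is the minor with rows $\{0,\dots,n-2,n\}$, and $\partial_\alpha\partial_{\alpha^*}W_n$ is the minor with both replaced plus the term in which the row and column shifts hit the same entry. A careful expansion should show that the latter contribution is absent or combines correctly. This yields the Toda-type identity
\begin{align}
    W_n\partial_\alpha\partial_{\alpha^*}W_n-\partial_\alpha W_n\,\partial_{\alpha^*}W_n=W_{n+1}W_{n-1}.
\end{align}

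Third, we substitute $W_n=c_n e^{n|\alpha|^2}M_n$. The left-hand side is $W_n^2\,\partial\bar\partial\ln W_n$, and $\partial_\alpha\partial_{\alpha^*}\ln(e^{n|\alpha|^2})=n$. Dividing by $c_n^2e^{2n|\alpha|^2}$ gives
\begin{align}
    M_n\partial\bar\partial M_n-|\partial M_n|^2+nM_n^2=\frac{c_{n+1}c_{n-1}}{c_n^2}\,M_{n+1}M_{n-1},
\end{align}
with $c_{n+1}c_{n-1}/c_n^2=n!^2/(n-1)!^2\cdot(\text{normalization})$. Once the constants are tracked, this should reduce to $n$, giving Eq.~\eqref{eq:fock_minor_recurrence}.

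The case $n=2$ can be checked against the stated formula for $\mathcal M_2$ to fix conventions. Analyticity from Bargmann-entireness justifies all manipulations pointwise, with no division needed.

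The main obstacle is the second step: proving the bi-Wronskian Toda identity cleanly, in particular showing that $\partial_\alpha\partial_{\alpha^*}W_n$ equals exactly the doubly shifted minor. Our first attempt would write $\partial_\alpha\partial_{\alpha^*}W_n$ as a sum over the row and column being differentiated and argue that all terms vanish except the one with last row and last column, using repeated rows and columns. If this becomes messy, we would instead take the known Toda-lattice/Darboux result as given, or prove it by induction using Jacobi's identity applied directly to $B$.
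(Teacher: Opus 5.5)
Your route is the paper's: express $M_n$ as a Gaussian times a normalized bi-directional Wronskian of the Bargmann kernel on $(z,w)=(\alpha^*,\alpha)$ (Lemma~\ref{lem:fock_minors_biwronskians}). Then apply the bilinear Toda identity $\tau_n\partial_z\partial_w\tau_n-\partial_z\tau_n\,\partial_w\tau_n=\tau_{n+1}\tau_{n-1}$, which the paper cites from Ma and which is proved by the Desnanot--Jacobi argument you sketch, and strip off the prefactors.

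\textbf{The constant is wrong, and this is the step that produces $n$.} Your unit-triangular argument shows that $M_n$ is the leading $n\times n$ determinant of $\bigl[e^{-|\alpha|^2}\partial_{\alpha^*}^i\partial_\alpha^jF/\sqrt{i!\,j!}\bigr]$. Pulling $1/\sqrt{i!}$ out of each row and $1/\sqrt{j!}$ out of each column gives
\begin{align}
    M_n=\frac{e^{-n|\alpha|^2}}{\prod_{k<n}k!}\,W_n ,
\end{align}
not $\prod_{k<n}(k!)^2$. With your $c_n=\prod_{k<n}(k!)^2$ you get $c_{n+1}c_{n-1}/c_n^2=(n!)^2/((n-1)!)^2=n^2$, so the recurrence would come out with $n^2$. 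There is no extra ``normalization'' available to absorb this. With $c_n=\prod_{k<n}k!$ the ratio is exactly $n!/(n-1)!=n$.

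Your proposed sanity check cannot catch the error. Comparing with the stated $\mathcal M_2$ tests only the $n=1$ instance of the recurrence, where $n=n^2$. A rank-two example does distinguish them. For $\hat O=\ket0\!\bra0+\ket1\!\bra1$ one has $K=1+zw$, $\tau_2\equiv1$, $M_2=e^{-2|\alpha|^2}$ and $M_3\equiv0$. The recurrence at $n=2$ then demands $\partial_\alpha\partial_{\alpha^*}\ln M_2=-2$, which holds, not $-4$.

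\textbf{Your ``main obstacle'' is not one.} Differentiate sequentially. In $\partial_\alpha W_n$ only the last column survives, giving the minor with columns $\{0,\dots,n-2,n\}$. In that minor each row is still the $\partial_{\alpha^*}$-derivative of the previous row, so applying $\partial_{\alpha^*}$ row by row again leaves only the last row. Hence $\partial_\alpha\partial_{\alpha^*}W_n$ is exactly the minor with rows and columns $\{0,\dots,n-2,n\}$, with no cross term.

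In Desnanot--Jacobi on $B=[\partial_{\alpha^*}^i\partial_\alpha^jF]_{0\le i,j\le n}$, this doubly shifted minor is obtained by deleting row $n-1$ and column $n-1$. Your $B^{(n,n)}_{\rm del}$ is just $W_n$ again, so that label must be corrected. Deleting row $n$ and column $n-1$ gives $\partial_\alpha W_n$, and deleting row $n-1$ and column $n$ gives $\partial_{\alpha^*}W_n$. This yields $W_{n+1}W_{n-1}=W_n\,\partial_\alpha\partial_{\alpha^*}W_n-\partial_\alpha W_n\,\partial_{\alpha^*}W_n$ at once.

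Finally, do the Gaussian substitution with the product rule, as the paper does, rather than through $\ln W_n$. For $G=e^{n|\alpha|^2}M$,
\begin{align}
    G\,\partial_\alpha\partial_{\alpha^*}G-\partial_\alpha G\,\partial_{\alpha^*}G=e^{2n|\alpha|^2}\bigl(M\,\partial_\alpha\partial_{\alpha^*}M-\partial_\alpha M\,\partial_{\alpha^*}M+nM^2\bigr).
\end{align}
This form also holds where $M_n$ vanishes.
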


The proof, given in Appendix~\ref{app:differential_recurrence}, combines the recurrence relation for bi-directional Wronskians established in Ref.~\cite[Eq.~(3.1)]{Ma2011-ee} with their relation to the displaced Fock minors.
Through the identification $M_n[\hat O]=\mathcal M_n[Q_{\hat O}]$, the recurrence
carries over to the differential Fock minors, and Lemma~\ref{lem:smooth_recurrence} shows that it holds for any smooth function $Q$ under the substitution $M_k\mapsto\mathcal M_k[Q]$.
Since $\mathcal M_0[Q]=1$ and $\mathcal M_1[Q]=Q$, the recurrence generates the entire hierarchy from $Q$ alone, each step requiring derivatives of the preceding minor only up to first order in each of $\alpha$ and $\alpha^{\ast}$.
Written in bilinear form, Eq.~\eqref{eq:fock_minor_recurrence} holds throughout phase space, including where a minor vanishes; wherever two consecutive minors are nonzero, solving for the next one recovers the logarithmic form~\eqref{eq:intro_hierarchy}.

The differential recurrence from Theorem~\ref{th:recurrence} endows the Fock-minor hierarchy with an integrable dynamics.
Introducing the logarithmic coordinates
$X_n\coloneqq-\ln(M_{n+1}/M_n)$, with $X_0=-\ln Q$, the differential recurrence becomes
\begin{align}
    \partial_\alpha\partial_{\alpha^*}X_n
    =
    n\,e^{X_{n-1}-X_n}
    -(n+1)\,e^{X_n-X_{n+1}}
    +1,
    \label{eq:toda_molecule}
\end{align}
which is the equation of motion of an inhomogeneous two-dimensional Toda molecule~\cite{Leznov1983-nq,Hirota2004-tv}, with $\alpha$ and $\alpha^{\ast}$ playing the role of the two Toda times and $n$ labeling the particles of the chain. 
Positivity of every Fock minor precisely corresponds to real dynamics, in which all particle coordinates remain real. 
For an operator of rank $r$ with Husimi function $Q$, the condition $M_{r+1}\equiv0$ sends the coordinate $X_r$ of the $(r+1)$-th particle to infinity, so that the Toda chain effectively contains $r$ particles, matching the operator's rank.

For a state $\hat\rho$ of rank $r$, with spectral decomposition $\hat\rho=\sum_{j=0}^{r-1}\lambda_j\ket{\psi_j}\!\!\bra{\psi_j}$, we show in
Appendix~\ref{app:bargmann_wronskian} that the last non-vanishing Fock minor factorizes in the simple form
\begin{align}
    M_r(\alpha)
    =
    \left(\prod_{j=0}^{r-1}\frac{\lambda_j}{j!}\right)
    e^{-r|\alpha|^2}
    \left|\mathcal W_\rho(\alpha^*)\right|^2,
    \label{eq:terminal_minor_stellar_wronskian}
\end{align}
where $\mathcal W_\rho\coloneqq\operatorname{Wr}[\psi_0,\ldots,\psi_{r-1}]$ is the
Wronskian of the stellar functions of the eigenstates, i.e.\ of the entire functions
$\psi_j(z)\coloneqq \bra{0}e^{z\hat{a}}\ket{\psi_j}$~\cite{Bargmann1961-in}, which satisfy $\bra{\alpha}\ket{\psi_j}=e^{-|\alpha|^2/2}\psi_j(\alpha^{\ast})$.
This factorization separates the roles of the spectrum and the support of $\hat\rho$:
the spectrum enters only through an overall prefactor, whereas $\mathcal W_\rho$
determines the phase-space dependence.
Under a unitary change of basis within the support, $\mathcal{W}_{\rho}$ acquires only a constant phase, so that, up to the spectral prefactor, $M_r$ is an invariant of the support of $\hat{\rho}$.
Since $\mathcal W_\rho$ is entire, its zeros are isolated and form a discrete set, which
defines a stellar constellation associated with the support of $\hat\rho$.
For $r=1$, this recovers the usual stellar representation of a pure
state~\cite{Chabaud2020-mq}; at higher rank, it defines a natural stellar invariant of
the state's support.

\medskip

\emph{Conclusion.---}
In this work, we have established necessary and sufficient differential conditions for a phase-space function to be the Husimi function of a quantum state.
They combine two ingredients: displaced Fock minors, which capture the rank and positivity of an operator algebraically, and the Husimi reconstruction formula, which turns each of these minors into a finite combination of local derivatives of the Husimi function.

Along the way, we have uncovered a recurrence relation relating successive minors, which organizes the differential levels of the hierarchy and generates it entirely from the Husimi function.
This recurrence relation also provides an unexpected link to the classical integrable dynamics of the Toda molecule. We leave a more in-depth study of this connection to future work.
Interestingly, the last nonvanishing minor of a finite-rank state factorizes in terms of the stellar functions of its eigenstates, hinting at an extension of the stellar formalism to mixed states.

The Fock-minor hierarchy also suggests a way to quantify statistical mixedness in phase space, order by order. 
For a quantum state, $M_n\not\equiv0$ precisely when the state has rank at least $n$, i.e., cannot be written as a mixture of fewer than $n$ pure states, so each nonvanishing minor beyond the first signals a higher order of mixedness.
Suitable functionals of these higher-order minors could then provide higher-order extensions of the Wehrl entropy and related phase-space inequalities.

More broadly, our results delimit the set of valid Husimi functions through conditions formulated entirely within phase space, in terms of the function and its local derivatives.
Many quantum features, such as Wigner negativity, singularity of the $P$ function, or non-Gaussianity, are themselves defined in phase-space terms.
Their attainable values and tradeoffs can be studied directly in phase space, without ever reconstructing the underlying density operator.
Phase space can therefore serve not only to represent quantum states, but to determine, within its own geometry, the physical limits of the quantum features it describes.

\medskip

\emph{Acknowledgments.---}
We acknowledge funding from the European Union's Horizon Europe Framework Programme through the EIC Pathfinder Challenge project VeriQuB under Grant Agreement No.~101114899.

\medskip
\emph{AI disclosure.---}
The authors acknowledge the use of AI tools to assist with manuscript preparation and the development and verification of mathematical proofs. 
The research question and central approach were developed by the authors, who take full responsibility for the results presented in the manuscript.

\bibliography{bib.bib}

@ARTICLE{Wunsche1996-ya,
  title     = "Tomographic reconstruction of the density operator from its
               normally ordered moments",
  author    = "Wünsche, Alfred",
  journal   = "Phys. Rev. A",
  publisher = "American Physical Society",
  volume    =  54,
  number    =  6,
  pages     = "5291--5294",
  month     =  dec,
  year      =  1996,
  doi       = "10.1103/PhysRevA.54.5291",
  issn      = "1050-2947"
}

@ARTICLE{Wunsche1991-yv,
  title     = "Displaced {F}ock states and their connection to quasiprobabilities",
  author    = "Wünsche, A",
  journal   = "Quantum Opt.",
  publisher = "IOP Publishing",
  volume    =  3,
  number    =  6,
  pages     = "359--383",
  month     =  dec,
  year      =  1991,
  doi       = "10.1088/0954-8998/3/6/005",
  issn      = "0954-8998,1747-3861"
}

@ARTICLE{Bohmann2020-xq,
  title     = "Phase-space inequalities beyond negativities",
  author    = "Bohmann, Martin and Agudelo, Elizabeth",
  journal   = "Phys. Rev. Lett.",
  publisher = "American Physical Society (APS)",
  volume    =  124,
  number    =  13,
  pages     =  133601,
  month     =  apr,
  year      =  2020,
  doi       = "10.1103/PhysRevLett.124.133601",
  pmid      =  32302197,
  issn      = "0031-9007,1079-7114",
}

@ARTICLE{Frank2025-iu,
  title     = "The generalized {W}ehrl entropy bound in quantitative form",
  author    = "Frank, Rupert L and Nicola, Fabio and Tilli, Paolo",
  journal   = "J. Eur. Math. Soc.",
  publisher = "European Mathematical Society - EMS - Publishing House GmbH",
  month     =  jul,
  year      =  2025,
  doi       = "10.4171/jems/1674",
  issn      = "1435-9855,1435-9863",
  pages     = ""
}

@BOOK{Hirota2004-tv,
  title     = "Cambridge tracts in mathematics: The direct method in soliton
               theory series number 155",
  author    = "Hirota, Ryogo",
  editor    = "Nagai, Atsushi and Nimmo, Jon and Gilson, Claire",
  publisher = "Cambridge University Press",
  address   = "Cambridge, England",
  series    = "Cambridge tracts in mathematics",
  month     =  jul,
  year      =  2004,
  isbn      =  9780521836609
}

@ARTICLE{Leznov1983-nq,
  title     = "Two-dimensional exactly and completely integrable dynamical
               systems: Monopoles, instantons, dual models, relativistic
               strings, {L}und-{R}egge model, generalized {T}oda lattice, etc",
  author    = "Leznov, A N and Saveliev, M V",
  journal   = "Commun. Math. Phys.",
  publisher = "Springer Nature",
  volume    =  89,
  number    =  1,
  pages     = "59--75",
  month     =  mar,
  year      =  1983,
  doi       = "10.1007/bf01219526",
  issn      = "0010-3616,1432-0916"
}

@ARTICLE{Ma2011-ee,
  title     = "Combined {W}ronskian solutions to the {2D} {T}oda molecule equation",
  author    = "Ma, Wen-Xiu",
  journal   = "Phys. Lett. A",
  publisher = "Elsevier BV",
  volume    =  375,
  number    =  45,
  pages     = "3931--3935",
  month     =  oct,
  year      =  2011,
  doi       = "10.1016/j.physleta.2011.09.016",
  issn      = "0375-9601,1873-2429"
}

@article{Wigner1932,
  author = {Wigner, E. P.},
  title = {On the Quantum Correction For Thermodynamic Equilibrium},
  journal = {Phys. Rev.},
  volume = {40},
  pages = {749--759},
  year = {1932},
  doi = {10.1103/PhysRev.40.749}
}

@article{Moyal1949,
  author = {Moyal, J. E.},
  title = {Quantum Mechanics as a Statistical Theory},
  journal = {Math. Proc. Cambridge Philos. Soc.},
  volume = {45},
  number = {1},
  pages = {99--124},
  year = {1949},
  doi = {10.1017/S0305004100000487}
}

@article{Husimi1940,
  author = {Husimi, K.},
  title = {Some Formal Properties of the Density Matrix},
  journal = {Proc. Phys.-Math. Soc. Jpn.},
  volume = {22},
  number = {4},
  pages = {264--314},
  year = {1940},
  doi = {10.11429/ppmsj1919.22.4_264}
}

@article{CahillGlauber1969b,
  author = {Cahill, K. E. and Glauber, R. J.},
  title = {Density Operators and Quasiprobability Distributions},
  journal = {Phys. Rev.},
  volume = {177},
  pages = {1882--1902},
  year = {1969},
  doi = {10.1103/PhysRev.177.1882}
}

@article{Hillery1984,
  author = {Hillery, M. and O'Connell, R. F. and Scully, M. O. and Wigner, E. P.},
  title = {Distribution Functions in Physics: Fundamentals},
  journal = {Phys. Rep.},
  volume = {106},
  number = {3},
  pages = {121--167},
  year = {1984},
  doi = {10.1016/0370-1573(84)90160-1}
}

@article{Kastler1965,
  author = {Kastler, D.},
  title = {The {$C^*$}-Algebras of a Free Boson Field},
  journal = {Commun. Math. Phys.},
  volume = {1},
  number = {1},
  pages = {14--48},
  year = {1965},
  doi = {10.1007/BF01649588}
}

@article{LoupiasMiracleSole1966,
  author = {Loupias, G. and Miracle-Sol{\'e}, S.},
  title = {{$C^*$}-Alg{\`e}bres des Syst{\`e}mes Canoniques. {I}},
  journal = {Commun. Math. Phys.},
  volume = {2},
  pages = {31--48},
  year = {1966},
  doi = {10.1007/BF01773339}
}

@article{LoupiasMiracleSole1967,
  author = {Loupias, G. and Miracle-Sol{\'e}, S.},
  title = {{$C^*$}-Alg{\`e}bres des Syst{\`e}mes Canoniques. {II}},
  journal = {Ann. Inst. Henri Poincar{\'e} A},
  volume = {6},
  number = {1},
  pages = {39--58},
  year = {1967},
  url = {https://www.numdam.org/item/AIHPA_1967__6_1_39_0/}
}

@article{NarcowichOConnell1986,
  author = {Narcowich, F. J. and O'Connell, R. F.},
  title = {Necessary and Sufficient Conditions for a Phase-Space Function to Be a {Wigner} Distribution},
  journal = {Phys. Rev. A},
  volume = {34},
  pages = {1--6},
  year = {1986},
  doi = {10.1103/PhysRevA.34.1}
}

@article{Nha2008,
  author = {Nha, H.},
  title = {Complete Conditions for Legitimate {Wigner} Distributions},
  journal = {Phys. Rev. A},
  volume = {78},
  pages = {012103},
  year = {2008},
  doi = {10.1103/PhysRevA.78.012103}
}

@misc{Callus2026-wl,
  title         = "Revealing entanglement through local features of phase-space
                   distributions",
  author        = "Callus, Elena and Gärttner, Martin and Haas, Tobias",
  month         =  feb,
  year          =  2026,
  archivePrefix = "arXiv",
  primaryClass  = "quant-ph",
  eprint        = "2602.21688"
}

@ARTICLE{Bargmann1961-in,
  title     = "On a {H}ilbert space of analytic functions and an associated
               integral transform part {I}",
  author    = "Bargmann, V",
  journal   = "Commun. Pure Appl. Math.",
  publisher = "Wiley",
  volume    =  14,
  number    =  3,
  pages     = "187--214",
  month     =  aug,
  year      =  1961,
  doi       = "10.1002/cpa.3160140303",
  issn      = "0010-3640,1097-0312"
}

@article{Berezin1974,
  author  = {Berezin, F. A.},
  title   = {Quantization},
  journal = {Mathematics of the USSR-Izvestiya},
  volume  = {8},
  number  = {5},
  pages   = {1109--1165},
  year    = {1974},
  doi     = {10.1070/IM1974v008n05ABEH002140}
}

@book{Gibbs1902,
  author    = {Gibbs, J. Willard},
  title     = {Elementary Principles in Statistical Mechanics:
               Developed with Especial Reference to the Rational
               Foundation of Thermodynamics},
  publisher = {Charles Scribner's Sons},
  address   = {New York},
  year      = {1902}
}

@ARTICLE{Glauber1963-eo,
  title     = "Coherent and incoherent states of the radiation field",
  author    = "Glauber, Roy J",
  journal   = "Phys. Rev.",
  publisher = "American Physical Society (APS)",
  volume    =  131,
  number    =  6,
  pages     = "2766--2788",
  month     =  sep,
  year      =  1963,
  doi       = "10.1103/physrev.131.2766",
  issn      = "0031-899X,1536-6065"
}

@ARTICLE{Sudarshan1963-rg,
  title     = "Equivalence of semiclassical and quantum mechanical descriptions
               of statistical light beams",
  author    = "Sudarshan, E C G",
  journal   = "Phys. Rev. Lett.",
  publisher = "American Physical Society (APS)",
  volume    =  10,
  number    =  7,
  pages     = "277--279",
  month     =  apr,
  year      =  1963,
  doi       = "10.1103/physrevlett.10.277",
  issn      = "0031-9007,1079-7114"
}

@ARTICLE{Shchukin2005-mt,
  title     = "Nonclassicality criteria in terms of moments",
  author    = "Shchukin, E and Richter, Th and Vogel, W",
  journal   = "Phys. Rev. A",
  publisher = "American Physical Society (APS)",
  volume    =  71,
  number    =  1,
  month     =  jan,
  year      =  2005,
  doi       = "10.1103/physreva.71.011802",
  issn      = "1050-2947,1094-1622",
  pages     = "011802(R)"
}

@ARTICLE{Shchukin2005-rr,
  title     = "Nonclassical moments and their measurement",
  author    = "Shchukin, E V and Vogel, W",
  journal   = "Phys. Rev. A",
  publisher = "American Physical Society (APS)",
  volume    =  72,
  number    =  4,
  month     =  oct,
  year      =  2005,
  doi       = "10.1103/physreva.72.043808",
  issn      = "1050-2947,1094-1622",
  pages     = "043808"
}

@ARTICLE{Bohmann2020-wo,
  title     = "Probing nonclassicality with matrices of phase-space
               distributions",
  author    = "Bohmann, M and Agudelo, E and Sperling, J",
  journal   = "Quantum",
  publisher = "Verein zur Förderung des Open Access Publizierens in den
               Quantenwissenschaften",
  volume    =  4,
  pages     =  343,
  month     =  mar,
  year      =  2020,
  doi       = "10.22331/Q-2020-10-15-343",
  issn      = "2521-327X"
}

@ARTICLE{Ryl2015-za,
  title     = "Unified nonclassicality criteria",
  author    = "Ryl, S and Sperling, J and Agudelo, E and Mraz, M and Köhnke, S
               and Hage, B and Vogel, W",
  journal   = "Phys. Rev. A",
  publisher = "American Physical Society (APS)",
  volume    =  92,
  number    =  1,
  month     =  jul,
  year      =  2015,
  doi       = "10.1103/physreva.92.011801",
  issn      = "1050-2947,1094-1622",
  pages     = "011801(R)"
}

@ARTICLE{Chabaud2020-mq,
  title     = "Stellar Representation of Non-{G}aussian Quantum States",
  author    = "Chabaud, Ulysse and Markham, Damian and Grosshans, Frédéric",
  journal   = "Phys. Rev. Lett.",
  publisher = "APS",
  volume    =  124,
  number    =  6,
  pages     =  063605,
  month     =  feb,
  year      =  2020,
  doi       = "10.1103/PhysRevLett.124.063605",
  pmid      =  32109095,
  issn      = "0031-9007,1079-7114"
  }

@book{BochnerMartin1948,
  author    = {Bochner, Salomon and Martin, William Ted},
  title     = {Several {C}omplex {V}ariables},
  series    = {Princeton Mathematical Series},
  volume    = {10},
  publisher = {Princeton University Press},
  address   = {Princeton, NJ},
  year      = {1948}
}

@ARTICLE{Adesso2014-pl,
  title     = "Continuous variable quantum information: Gaussian states and
               beyond",
  author    = "Adesso, Gerardo and Ragy, Sammy and Lee, Antony R",
  journal   = "Open Syst. Inf. Dyn.",
  publisher = "World Scientific Pub Co Pte Lt",
  volume    =  21,
  number    = "01n02",
  pages     =  1440001,
  month     =  jun,
  year      =  2014,
  doi       = "10.1142/s1230161214400010",
  issn      = "1230-1612,1793-7191",
}

@ARTICLE{Foggiatto2017,
  title     = "Approximate formulas for expectation values using coherent states",
  author    = "Foggiatto, A L and Angelo, R M and Ribeiro, A D",
  journal   = "Prog. Theor. Exp. Phys.",
  publisher = "Oxford University Press (OUP)",
  volume    =  2017,
  number    =  10,
  pages     = "103A01",
  month     =  oct,
  year      =  2017,
  doi       = "10.1093/ptep/ptx129",
  issn      = "2050-3911"
}

@ARTICLE{Braunstein1998,
  title     = "Generalized phase-integrals for linear homogeneous {ODEs}",
  author    = "Braunstein, Samuel L",
  journal   = "J. Phys. A Math. Gen.",
  publisher = "IOP Publishing",
  volume    =  31,
  number    =  27,
  pages     = "5767--5773",
  month     =  jul,
  year      =  1998,
  doi       = "10.1088/0305-4470/31/27/007",
  issn      = "0305-4470,1361-6447"
}

@ARTICLE{Bostan2010,
  title     = "Wronskians and Linear Independence",
  author    = "Bostan, Alin and Dumas, Philippe",
  journal   = "Am. Math. Mon.",
  publisher = "Informa UK Limited",
  volume    =  117,
  number    =  8,
  pages     =  722,
  year      =  2010,
  doi       = "10.4169/000298910x515785",
  issn      = "0002-9890,1930-0972"
}

@ARTICLE{Mityagin2020,
  title     = "The zero set of a real analytic function",
  author    = "Mityagin, B S",
  journal   = "Math. Notes",
  publisher = "Pleiades Publishing Ltd",
  volume    =  107,
  number    = "3-4",
  pages     = "529--530",
  month     =  mar,
  year      =  2020,
  doi       = "10.1134/s0001434620030189",
  issn      = "0001-4346,1573-8876"
}

@BOOK{Hormander1990,
  title     = "An introduction to complex analysis in several variables: Volume
               7",
  author    = "Hörmander, Lars",
  publisher = "North-Holland",
  address   = "Oxford, England",
  edition   =  3,
  series    = "North-Holland Mathematical Library",
  month     =  jan,
  year      =  1990,
  isbn      =  9780444884466
}

@ARTICLE{Titulaer1965,
  title     = "Correlation functions for coherent fields",
  author    = "Titulaer, U M and Glauber, R J",
  journal   = "Phys. Rev.",
  publisher = "American Physical Society (APS)",
  volume    =  140,
  number    = "3B",
  pages     = "B676--B682",
  month     =  nov,
  year      =  1965,
  doi       = "10.1103/physrev.140.b676",
  issn      = "0031-899X,1536-6065"
}

@ARTICLE{Mandel1986,
  title     = "Non-classical states of the electromagnetic field",
  author    = "Mandel, L",
  journal   = "Phys. Scr.",
  publisher = "IOP Publishing",
  volume    = "T12",
  pages     = "34--42",
  month     =  jan,
  year      =  1986,
  doi       = "10.1088/0031-8949/1986/t12/005",
  issn      = "0031-8949,1402-4896"
}

@ARTICLE{SperlingVogel2020,
  title     = "Quasiprobability distributions for quantum-optical coherence and
               beyond",
  author    = "Sperling, J and Vogel, W",
  journal   = "Phys. Scr.",
  publisher = "IOP Publishing",
  volume    =  95,
  number    =  3,
  pages     =  034007,
  month     =  mar,
  year      =  2020,
  doi       = "10.1088/1402-4896/ab5501",
  issn      = "0031-8949,1402-4896"
}

@ARTICLE{Hudson1974,
  title   = "When is the {W}igner quasi-probability density non-negative?",
  author  = "Hudson, R L",
  journal = "Rep. Math. Phys.",
  volume  =  6,
  number  =  2,
  pages   = "249--252",
  month   =  oct,
  year    =  1974,
  doi     = "10.1016/0034-4877(74)90007-X",
  issn    = "0034-4877"
}

@ARTICLE{Kenfack2004-cr,
  title     = "Negativity of the {W}igner function as an indicator of non-classicality",
  author    = "Kenfack, Anatole and Życzkowski, Karol",
  journal   = "J. Opt. B Quantum Semiclassical Opt.",
  publisher = "IOP Publishing",
  volume    =  6,
  number    =  10,
  pages     =  396,
  month     =  aug,
  year      =  2004,
  doi       = "10.1088/1464-4266/6/10/003",
  issn      = "1464-4266",
}

@BOOK{Horn2012-se,
  title     = "Matrix Analysis",
  author    = "Horn, Roger A and Johnson, Charles R",
  publisher = "Cambridge University Press",
  month     =  oct,
  year      =  2012,
  isbn      =  9781139788885
}

@BOOK{Reed1978-so,
  title     = "{IV}: Analysis of Operators: Volume 4",
  author    = "Reed, Michael and Simon, Barry",
  publisher = "Academic Press",
  series    = "Methods of Modern Mathematical Physics",
  month     =  apr,
  year      =  1978,
  isbn      =  9780080570457
}

@ARTICLE{Wirtinger1927-fl,
  title     = "Zur formalen {T}heorie der {F}unktionen von mehr komplexen
               {V}eranderlichen",
  author    = "Wirtinger, W",
  journal   = "Math. Ann.",
  publisher = "Springer Science and Business Media LLC",
  volume    =  97,
  number    =  1,
  pages     = "357--375",
  month     =  dec,
  year      =  1927,
  doi       = "10.1007/bf01447872",
  issn      = "0025-5831,1432-1807"
}

@ARTICLE{Cordero2019-aj,
  title     = "On the positivity of trace class operators",
  author    = "Cordero, Elena and de Gosson, Maurice and Nicola, Fabio",
  journal   = "Adv. Theor. Math. Phys.",
  publisher = "International Press of Boston",
  volume    =  23,
  number    =  8,
  pages     = "2061--2091",
  year      =  2019,
  doi       = "10.4310/atmp.2019.v23.n8.a4",
  issn      = "1095-0761,1095-0753"
}

\newpage

\appendix
\setcounter{secnumdepth}{1}

\begin{center}
    {\Large Appendix}
\end{center}

\section{Bargmann-entire operators}
\label{app:bargmann-entire-forms}

Throughout this work, the term operator refers to a Fock-basis matrix, which generalizes the notion of a Hilbert-space operator.
Let $\mathcal H_{\mathrm{fin}}\coloneqq\operatorname{span}\{\ket n:n\geq0\}$ be the space of finite Fock superpositions.
A Fock-basis matrix is a complex matrix $O=(O_{mn})_{m,n\geq0}$, viewed as a sesquilinear form on $\mathcal H_{\mathrm{fin}}$ through
\begin{align}
O[\phi,\psi]\coloneqq\sum_{m,n}\phi_m^\ast O_{mn}\psi_n.
\end{align}
Since both arguments have finite Fock support, this sum is always finite.

Every Hilbert-space operator whose domain contains $\mathcal H_{\mathrm{fin}}$ induces such a Fock-basis matrix, but a Fock-basis matrix does not necessarily extend to an operator on $\mathcal H$ (e.g., $O_{mn}=1$ for all $m,n$).
Nonetheless, throughout this work, we refer to Fock-basis matrices as operators and denote them by $\hat O$.
Whenever the distinction is needed, we explicitly use the term Hilbert-space operator.

Common properties of Hilbert-space operators extend directly to Fock-basis matrices.
The adjoint has matrix entries $(O^\dagger)_{mn}=O_{nm}^*$.
The Fock-basis matrix is Hermitian when $O^\dagger=O$ and positive semidefinite when $O[\psi,\psi]\geq0$ for every $\psi\in\mathcal H_{\mathrm{fin}}$.
Its rank is $\operatorname{rank}\hat O\coloneqq\sup_{N\geq1}\operatorname{rank}[O_{mn}]_{m,n=0}^{N-1}$.
For a positive Fock-basis matrix, we also define the (possibly infinite) trace by $\operatorname{Tr}(\hat O)\coloneqq\sum_{n\geq0}O_{nn}$.
For bounded Hilbert-space operators, these definitions agree with their usual operator-theoretic counterparts.

Extending a Fock-basis matrix beyond $\mathcal H_{\mathrm{fin}}$ involves handling infinite sums that need not converge. 
This is relevant in this work because we consider displaced Fock states, which generally lie outside $\mathcal H_{\mathrm{fin}}$. 
To control these sums, we use the associated Bargmann kernel \begin{align}
K(z,w)
\coloneqq
\sum_{p,q\geq0}
\frac{O_{pq}}{\sqrt{p!q!}}\;z^p\,w^q,
\end{align}
where $z$ and $w$ are independent complex variables. As in Definition~\ref{def:bargmann_entire}, a Fock-basis matrix is Bargmann-entire when its Bargmann kernel is jointly entire on $\mathbb C^2$. 
To relate this kernel to displaced Fock matrix entries, we introduce the unnormalized coherent states:
\begin{align}
\lVert w\rangle
&\coloneqq e^{w\hat a^\dagger}\ket0
=\sum_{n\geq0}\frac{w^n}{\sqrt{n!}}\ket n,
\\
\langle z\rVert
&\coloneqq\bra0e^{z\hat a}
=\sum_{m\geq0}\frac{z^m}{\sqrt{m!}}\bra m.
\end{align}
For a Bargmann-entire operator $\hat O$, the kernel can then be written compactly as $K(z,w)=\langle z\rVert\hat O\lVert w\rangle$.
Hereafter, we use the displacement identities
\begin{align}
\hat D(\alpha)\lVert w\rangle
&=e^{-|\alpha|^2/2-\alpha^*w}\lVert w+\alpha\rangle,
\label{eq:displacement-unnorm-coherent-1}
\\[0.6em]
\langle z\rVert\hat D^\dagger(\alpha)
&=e^{-|\alpha|^2/2-\alpha z}\langle z+\alpha^*\rVert
\label{eq:displacement-unnorm-coherent-2}
\end{align}
to establish absolute convergence of the displaced Fock matrix entries and to express them in terms of $K$.

\begin{lemma}[Absolute convergence of displaced Fock matrix entries]
\label{lem:displaced-elements}
Let $\hat O$ be a Bargmann-entire operator, with Fock-basis matrix $(O_{pq})_{p,q\geq0}$. Then, for every $\alpha\in\mathbb C$ and $m,n\geq0$, the series
\begin{align}
O_{mn}^{(\alpha)}
\coloneqq
\sum_{p,q\geq0}
\bra m\hat D^\dagger(\alpha)\ket p\,
O_{pq}\,
\bra q\hat D(\alpha)\ket n
\label{eq:def-displaced-fock-matrix-element}
\end{align}
converges absolutely, and thus defines a valid Fock-basis matrix $O^{(\alpha)}$.
\end{lemma}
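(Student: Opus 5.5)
The plan is to bound each term of the double series by an absolutely summable majorant built from the Bargmann kernel's Taylor coefficients. We start with the explicit displaced Fock amplitudes. From $\hat D(\alpha)=e^{-|\alpha|^2/2}e^{\alpha\hat a^\dagger}e^{-\alpha^*\hat a}$ we get
\begin{align}
\bra q\hat D(\alpha)\ket n
=e^{-|\alpha|^2/2}\sum_{k=0}^{\min(q,n)}\frac{\sqrt{q!\,n!}}{k!\,(q-k)!\,(n-k)!}\,\alpha^{q-k}(-\alpha^*)^{n-k}.
\end{align}
Since $n$ is fixed, $k\le n$ and $\binom{n}{k}\le 2^n$, so this yields a bound
\begin{align}
\bigl|\bra q\hat D(\alpha)\ket n\bigr|\le C_n(\alpha)\,\sqrt{q!}\sum_{k=0}^{n}\frac{|\alpha|^{q-k}}{(q-k)!}.
\end{align}
Here $C_n(\alpha)$ is independent of $q$. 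For $q\ge n$ this is at most $C'_n(\alpha)\,q^n|\alpha|^{q}/\sqrt{q!}$, using $\sqrt{q!}/(q-k)!\le q^k/\sqrt{q!}$. The same bound holds for $|\bra m\hat D^\dagger(\alpha)\ket p|$ with $p$ in place of $q$.

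Inserting these bounds, we reduce absolute convergence of the double series to the finiteness of
\begin{align}
\sum_{p,q\ge0}\frac{|O_{pq}|}{\sqrt{p!\,q!}}\,p^m q^n\,|\alpha|^{p+q}.
\end{align}
The finitely many small indices $p<m$ or $q<n$ contribute terms of the same form up to constants, so they are harmless.

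The key step is to control this majorant using only the Bargmann-entire hypothesis. Since $K$ is jointly entire, its power series $\sum_{p,q}c_{pq}z^pw^q$ with $c_{pq}=O_{pq}/\sqrt{p!q!}$ converges absolutely on all of $\mathbb C^2$. This is the standard fact that a power series converging on a polydisc converges absolutely on any smaller polydisc, which follows from Cauchy's estimates $|c_{pq}|\le \sup_{|z|=R,|w|=R}|K|\,R^{-p-q}$. Taking $R=2|\alpha|+1$, the polynomial factors $p^mq^n$ are absorbed, because $\sum_{p,q} p^mq^n\bigl(|\alpha|/R\bigr)^{p+q}<\infty$. This gives absolute convergence of the majorant, and hence of the series defining $O^{(\alpha)}_{mn}$ for every $\alpha$, $m$, $n$, which is exactly what makes $O^{(\alpha)}$ a well-defined Fock-basis matrix.

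The main obstacle is the first step: getting a clean uniform-in-$q$ estimate of the displaced-Fock amplitudes in which the $1/\sqrt{q!}$ decay is made explicit. One could instead try to use the unnormalized coherent-state identities~\eqref{eq:displacement-unnorm-coherent-1}--\eqref{eq:displacement-unnorm-coherent-2} and differentiate in $w$ at $w=0$. However, that approach presupposes the convergence being proven, so the direct Laguerre-type bound is the safer route. With the amplitude bound in hand, the Cauchy-estimate argument is routine.
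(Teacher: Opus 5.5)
Your proof is correct, and its skeleton matches the paper's: bound each displacement amplitude by a geometric factor over $\sqrt{q!}$, uniformly in the summation index, then use absolute convergence of the Taylor series of $K$. You differ in how the amplitude bound is obtained.

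The paper never expands $\bra q\hat D(\alpha)\ket n$ explicitly. It takes the generating function $d_q(w)=\bra q\hat D(\alpha)\lVert w\rangle=e^{-|\alpha|^2/2-\alpha^*w}(w+\alpha)^q/\sqrt{q!}$, notes that $d_q^{(n)}(0)=\sqrt{n!}\bra q\hat D(\alpha)\ket n$, and applies Cauchy's estimate on $|w|=1$. This gives $|\bra q\hat D(\alpha)\ket n|\le\sqrt{n!}\,e^{-|\alpha|^2/2+|\alpha|}(1+|\alpha|)^q/\sqrt{q!}$. Since there is no polynomial factor in $q$, the majorant is exactly $\sum_{p,q}|O_{pq}|R^{p+q}/\sqrt{p!q!}$ with $R=1+|\alpha|$. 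That is just absolute convergence of the Taylor series of $K$ at $(R,R)$, with no small-index bookkeeping.

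Your explicit normal-ordered sum gives the sharper rate $|\alpha|^q$, but costs a factor $q^n$. You correctly absorb that factor with Cauchy estimates on a polydisc of radius $2|\alpha|+1$. Your route is more hands-on and equally rigorous.

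Two corrections, though.

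First, your reason for avoiding the coherent-state route is wrong. The function $d_q$ involves only the unitary $\hat D(\alpha)$ acting on the normalizable vector $\lVert w\rangle$, not $\hat O$. Its series expansion and closed form therefore follow from boundedness of $\hat D(\alpha)$ alone and presuppose nothing about the series being controlled. That route is precisely the paper's proof.

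Second, your constant $C'_n(\alpha)$ is really $C_n(\alpha)\sum_{k\le n}|\alpha|^{-k}$, which is infinite at $\alpha=0$. There the bound $q^n|\alpha|^q/\sqrt{q!}$ indeed fails at $q=n\ge1$. Either dispose of $\alpha=0$ separately (it is trivial), or bound $|\alpha|^{q-k}\le(1+|\alpha|)^q$ and $q^k\le(q+1)^n$. The latter gives a single estimate valid for all $\alpha$ and all $q$, and also removes the separate treatment of $q<n$.
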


\begin{proof}
We first bound the displacement inner products using Cauchy's estimate, then use the absolute convergence of the Taylor series of $K$.

For each $q\geq0$, define the entire function
\begin{align}
d_q(w)
&\coloneqq\bra q\hat D(\alpha)\lVert w\rangle
\nonumber\\
&=\sum_{k\geq0}\frac{w^k}{\sqrt{k!}}
\bra q\hat D(\alpha)\ket k
\nonumber\\
&=e^{-|\alpha|^2/2-\alpha^*w}
\frac{(w+\alpha)^q}{\sqrt{q!}}.
\end{align}
Its $n$th derivative at zero is $d_q^{(n)}(0)=\sqrt{n!}\bra q\hat D(\alpha)\ket n$.
Cauchy's estimate on the unit circle, $|d_q^{(n)}(0)|\leq n!\sup_{|w|=1}|d_q(w)|$, together with the expression of $d_q$, gives
\begin{align}
\left|\bra q\hat D(\alpha)\ket n\right|
&\leq\sqrt{n!}\sup_{|w|=1}|d_q(w)|
\nonumber\\
&\leq\sqrt{n!}\,e^{-|\alpha|^2/2+|\alpha|}
\frac{(1+|\alpha|)^q}{\sqrt{q!}}
\nonumber\\
&=C_n\frac{R^q}{\sqrt{q!}},
\end{align}
where $R=1+|\alpha|$ and $C_n=\sqrt{n!}\,e^{-|\alpha|^2/2+|\alpha|}$ is independent of $q$.

Applying the same bound to $\bra p\hat D(\alpha)\ket m$, we obtain
\begin{align}
\nonumber&\sum_{p,q\geq0}
\left|
\bra m\hat D^\dagger(\alpha)\ket p
O_{pq}
\bra q\hat D(\alpha)\ket n
\right|
\\
&\qquad\qquad\leq C_mC_n
\sum_{p,q\geq0}
\frac{|O_{pq}|R^{p+q}}{\sqrt{p!q!}}
<\infty.
\end{align}
This last sum is finite because the Taylor series of the jointly entire kernel $K$ converges absolutely at $(R,R)$~\cite[Theorem~2.2.6]{Hormander1990}.
\end{proof}

We denote the Fock-basis matrix $O^{(\alpha)}$ with entries $O_{mn}^{(\alpha)}$ by $\hat D^\dagger(\alpha)\hat O\hat D(\alpha)$.
For bounded Hilbert-space operators, these are the usual displaced matrix entries.
We next express these matrix entries directly in terms of $K$.

\begin{lemma}[Displaced Fock matrix entries from the Bargmann kernel]
\label{lem:displaced-elements-kernel}
Let $\hat O$ be a Bargmann-entire operator with Bargmann kernel $K$.
For every $\alpha\in\mathbb C$ and $m,n\geq0$,
\begin{align}
    O_{mn}^{(\alpha)}
    ={}&\frac{e^{|\alpha|^2}}{\sqrt{m!n!}}
    \left.
    \partial_z^m\partial_w^n
    \left[e^{-\alpha z-\alpha^*w}K(z,w)\right]
    \right|_{z=\alpha^*,\,w=\alpha}.
    \label{eq:displaced-fock-from-kernel}
\end{align}
In particular, $\alpha\mapsto O_{mn}^{(\alpha)}$ is real-analytic in $(\operatorname{Re}\alpha,\operatorname{Im}\alpha)$.
\end{lemma}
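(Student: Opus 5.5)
We will prove Lemma~\ref{lem:displaced-elements-kernel} by writing the displaced entry $O^{(\alpha)}_{mn}$ as a coefficient-extraction from a generating function in two auxiliary variables. We then use the displacement identities \eqref{eq:displacement-unnorm-coherent-1}--\eqref{eq:displacement-unnorm-coherent-2} to express that generating function through $K$.

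Concretely, for independent complex variables $z,w$ we introduce
\begin{align}
G_\alpha(z,w)\coloneqq\sum_{m,n\geq0}\frac{z^m w^n}{\sqrt{m!n!}}\,O^{(\alpha)}_{mn}
=\sum_{p,q}\langle z\rVert\hat D^\dagger(\alpha)\ket p\,O_{pq}\,\bra q\hat D(\alpha)\lVert w\rangle .
\end{align}
The second equality requires interchanging the sums over $(m,n)$ and $(p,q)$, which needs absolute convergence of the quadruple sum. This is the main technical step, and I would handle it exactly as in Lemma~\ref{lem:displaced-elements}, but with $m,n$ also summed. The closed form $d_q(w)=e^{-|\alpha|^2/2-\alpha^*w}(w+\alpha)^q/\sqrt{q!}$ has Taylor coefficients in $w$ bounded in absolute value by those of $e^{|\alpha||w|}(|w|+|\alpha|)^q/\sqrt{q!}$. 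Hence
\begin{align}
\sum_n\frac{|w|^n}{\sqrt{n!}}\,|\bra q\hat D(\alpha)\ket n|\le e^{-|\alpha|^2/2}\sum_n\frac{|w|^n}{\sqrt{n!}}\cdot\sqrt{n!}\,[\text{coeff.}]\le e^{-|\alpha|^2/2+|\alpha||w|}\frac{(|w|+|\alpha|)^q}{\sqrt{q!}},
\end{align}
and similarly on the bra side. The quadruple sum is therefore dominated by a constant times $\sum_{p,q}|O_{pq}|(|z|+|\alpha|)^p(|w|+|\alpha|)^q/\sqrt{p!q!}$, which is finite by absolute convergence of the Taylor series of the entire $K$~\cite[Theorem~2.2.6]{Hormander1990}. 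In particular $G_\alpha$ is entire on $\mathbb C^2$, and its Taylor coefficients are the $O^{(\alpha)}_{mn}/\sqrt{m!n!}$.

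With the interchange justified, the inner sums over $p$ and $q$ are the expansions of $\langle z\rVert\hat D^\dagger(\alpha)$ and $\hat D(\alpha)\lVert w\rangle$. By \eqref{eq:displacement-unnorm-coherent-1}--\eqref{eq:displacement-unnorm-coherent-2},
\begin{align}
G_\alpha(z,w)=e^{-|\alpha|^2-\alpha z-\alpha^* w}\,K(z+\alpha^*,w+\alpha).
\end{align}
Extracting coefficients gives $O^{(\alpha)}_{mn}=\frac{1}{\sqrt{m!n!}}\partial_z^m\partial_w^n G_\alpha(0,0)$. Shifting variables $z\mapsto z-\alpha^*$, $w\mapsto w-\alpha$ turns the prefactor $e^{-|\alpha|^2-\alpha z-\alpha^*w}$ into $e^{|\alpha|^2}e^{-\alpha z-\alpha^* w}$, using $-|\alpha|^2+2|\alpha|^2=|\alpha|^2$. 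Evaluating at $(z,w)=(\alpha^*,\alpha)$ then yields exactly \eqref{eq:displaced-fock-from-kernel}.

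For real-analyticity, we expand the derivative in \eqref{eq:displaced-fock-from-kernel} by the Leibniz rule. This gives a finite sum of terms of the form $e^{|\alpha|^2}\alpha^{a}(\alpha^*)^{b}e^{-2|\alpha|^2}(\partial_z^{j}\partial_w^{k}K)(\alpha^*,\alpha)$. Each factor is real-analytic in $(\operatorname{Re}\alpha,\operatorname{Im}\alpha)$: an entire function of $(z,w)$ composed with the real-linear map $\alpha\mapsto(\alpha^*,\alpha)$ has a convergent power series in $\operatorname{Re}\alpha$ and $\operatorname{Im}\alpha$. Products and finite sums of real-analytic functions are real-analytic, which concludes the argument.
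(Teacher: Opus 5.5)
Your proof is correct and follows the same route as the paper. Both arguments identify the generating function $\sum_{m,n} z^m w^n O^{(\alpha)}_{mn}/\sqrt{m!n!}$ with $e^{-|\alpha|^2-\alpha z-\alpha^* w}K(z+\alpha^*,w+\alpha)$ through the displacement identities, extract Taylor coefficients at the origin, and then translate the evaluation point to $(\alpha^*,\alpha)$. Your majorant bound justifying the interchange of the $(m,n)$ and $(p,q)$ sums fills in a step the paper leaves implicit, and your Leibniz-rule argument for real-analyticity is equivalent to the paper's observation that $e^{-uv-vz-uw}K(z+u,w+v)$ is jointly entire in $(u,v,z,w)$.
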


\begin{proof}
Applying the displacement identities \eqref{eq:displacement-unnorm-coherent-1}--\eqref{eq:displacement-unnorm-coherent-2} to $K(z,w)=\langle z\Vert\hat{O}\Vert w\rangle$ gives
\begin{align}
&\langle z\rVert
\hat D^\dagger(\alpha)\hat O\hat D(\alpha)
\lVert w\rangle
\nonumber
\\
&\qquad=
e^{-|\alpha|^2/2-\alpha z}
e^{-|\alpha|^2/2-\alpha^\ast w}
\langle z+\alpha^{\ast}\rVert
\hat O
\lVert w+\alpha\rangle
\nonumber
\\
&\qquad=
e^{-|\alpha|^2-\alpha z-\alpha^\ast w}
K(z+\alpha^\ast,w+\alpha).
\label{eq:unnormalized-coherent-to-kernel}
\end{align}

On the other hand, expanding in the Fock basis and using the definition \eqref{eq:def-displaced-fock-matrix-element} gives
\begin{align}
&\langle z\rVert
\hat D^\dagger(\alpha)\hat O\hat D(\alpha)
\lVert w\rangle
\nonumber
\\
&\qquad=
\sum\limits_{p,q\geq 0}
\langle z\rVert
\hat D^\dagger(\alpha)\ket{p}
O_{pq}
\bra{q}\hat D(\alpha)
\lVert w\rangle
\nonumber
\\&\qquad=
\sum\limits_{m,n\geq 0}
\frac{z^m w^n}{\sqrt{m!n!}}
\sum\limits_{p,q\geq 0}
\langle m|
\hat D^\dagger(\alpha)\ket{p}
O_{pq}
\bra{q}\hat D(\alpha)
| n\rangle
\nonumber
\\&\qquad=
\sum_{m,n\geq0}
\frac{O_{mn}^{(\alpha)}}{\sqrt{m!n!}}\,z^mw^n.
\label{eq:unnormalized-coherent-to-sum-matrix-element}
\end{align}

Combining Eqs.~\eqref{eq:unnormalized-coherent-to-kernel} and~\eqref{eq:unnormalized-coherent-to-sum-matrix-element} allows us to recover $O_{mn}^{(\alpha)}$ from $K$.
Differentiating $m$ times with respect to $z$ and $n$ times with respect to $w$, evaluating at the origin, and dividing by $\sqrt{m!n!}$ gives
\begin{align}
O_{mn}^{(\alpha)}
\!=\!
\frac{e^{-|\alpha|^2}}{\sqrt{m!n!}}
\left.
\partial_z^m\partial_w^n
\left[
e^{-\alpha z-\alpha^\ast w}
K(z\!+\!\alpha^\ast,w\!+\!\alpha)
\right]
\right|_{z=w=0}\!.
\label{eq:displaced-fock-coefficient}
\end{align}

Under the translation $z'=z+\alpha^\ast$ and $w'=w+\alpha$, the evaluation point becomes $(\alpha^\ast,\alpha)$ and the exponential factor in the bracket becomes $e^{2|\alpha|^2-\alpha z'-\alpha^\ast w'}$, which yields Eq.~\eqref{eq:displaced-fock-from-kernel}.

Finally, to establish real-analyticity, we regard $\alpha^\ast$ and $\alpha$ in Eq.~\eqref{eq:displaced-fock-coefficient} as independent complex variables $u,v$.
The expression $e^{-uv-vz-uw}K(z+u,w+v)$ is jointly entire in $(u,v,z,w)$.
Differentiation with respect to $z,w$ and evaluation at $z=w=0$ therefore yield an entire function of $(u,v)$.
Restricting to $(u,v)=(\alpha^\ast,\alpha)$ therefore implies that $\alpha\mapsto O_{mn}^{(\alpha)}$ is real-analytic in $(\operatorname{Re}\alpha,\operatorname{Im}\alpha)$.
\end{proof}

In particular, setting $m=n=0$ in Eq.~\eqref{eq:displaced-fock-from-kernel} gives
$Q(\alpha)=O_{00}^{(\alpha)}=e^{-|\alpha|^2}K(\alpha^\ast,\alpha)$.
Moreover, every finite minor of the displaced Fock matrix is real-analytic, since it is the determinant of a matrix with real-analytic entries.

\section{Phase-space reconstruction formula}
\label{app:reconstruction-formula}

In this appendix, we prove the Husimi reconstruction formula below for Bargmann-entire operators, and discuss a formal extension to other quasiprobability distributions in the Cahill--Glauber family~\cite{CahillGlauber1969b}.
All identities are understood entrywise in the Fock basis, with $\alpha$ and $\alpha^\ast$ treated as independent variables for differentiation.

\begin{proof}[Proof of Lemma~\ref{lem:husimi-reconstruction}]
Normal ordering makes the right-hand side well defined: in its $(m,n)$ matrix entry, only powers of $\hat a^\dagger$ up to $m$, and of $\hat a$ up to $n$ contribute.

By Eq.~\eqref{eq:unnormalized-coherent-to-kernel}, the Bargmann kernel of the displaced operator satisfies
\begin{align}
\langle z\rVert
\hat D^\dagger(\alpha)\hat O\hat D(\alpha)
\lVert w\rangle
&=
e^{-|\alpha|^2-\alpha z-\alpha^\ast w}
K(z+\alpha^\ast,w+\alpha)
\nonumber\\
&=
e^{zw}e^{z\partial_{\alpha^\ast}}
e^{w\partial_\alpha}Q(\alpha).
\label{eq:husimi-generating-identity}
\end{align}
The second equality uses $Q(\alpha)=e^{-|\alpha|^2}K(\alpha^\ast,\alpha)$.
Since $e^{-uv}K(u,v)$ is entire, these translations converge for all $z,w$.

Finally, $\langle z\rVert\hat a^\dagger=z\langle z\rVert$, $\hat a\lVert w\rangle=w\lVert w\rangle$, and $\langle z\rVert w\rangle=e^{zw}$, so the last expression in Eq.~\eqref{eq:husimi-generating-identity} is also the Bargmann kernel of the right-hand side of Eq.~\eqref{eq:husimi_reconstruction}.
The equality of the Taylor coefficients proves the result.
\end{proof}

The Husimi function is the $s=-1$ member of the Cahill--Glauber family $W^{(s)}$, which also includes the Wigner ($s=0$) and Glauber--Sudarshan ($s=1$) representations.
We now consider a formal extension of its reconstruction formula to other members of this family.
For $s>-1$, Gaussian smoothing gives~\cite{CahillGlauber1969b}
\begin{align}
Q(\alpha)
=
\frac{2}{s+1}
\int_{\mathbb C}\frac{\mathrm d^2\beta}{\pi}\,
e^{-\frac{2}{s+1}|\alpha-\beta|^2}W^{(s)}(\beta).
\label{eq:cahill-glauber-convolution}
\end{align}
Thus reconstruction from $W^{(s)}$ can be achieved formally by recovering $Q$ via a Gaussian smoothing and then applying the Husimi reconstruction formula in Lemma~\ref{lem:husimi-reconstruction}.

The same smoothing can formally be expressed as an exponential of the Laplacian (see \cite[App.~A, Eqs.~(A.4)--(A.6)]{Foggiatto2017} for a derivation and a discussion of convergence):
\begin{align}
Q
=e^{\frac{s+1}{2}\partial_\alpha\partial_{\alpha^\ast}}
W^{(s)}.
\label{eq:gaussian-differential-operator}
\end{align}

Since $[\hat a\partial_\alpha,\hat a^\dagger\partial_{\alpha^\ast}]=\partial_\alpha\partial_{\alpha^\ast}$ commutes with both $\hat a\partial_\alpha$ and $\hat a^\dagger\partial_{\alpha^\ast}$, the Baker--Campbell--Hausdorff formula allows us to combine formally Eqs.~\eqref{eq:husimi_reconstruction} and \eqref{eq:gaussian-differential-operator} as
\begin{align}
\hat D^\dagger(\alpha)\hat O\hat D(\alpha)
&=
e^{\partial_{\alpha^\ast}\hat a^\dagger}
e^{\partial_\alpha\hat a}Q(\alpha)
\nonumber\\
&=
e^{\partial_{\alpha^\ast}\hat a^\dagger}
e^{\partial_\alpha\hat a}
e^{\frac{s+1}{2}\partial_\alpha\partial_{\alpha^\ast}}
W^{(s)}(\alpha)
\nonumber\\
&=
\exp\!\left(
\partial_\alpha\hat a
+\partial_{\alpha^\ast}\hat a^\dagger
+\tfrac{s}{2}\partial_\alpha\partial_{\alpha^\ast}
\right)W^{(s)}(\alpha).
\label{eq:formal-s-reconstruction}
\end{align}
The first equality is the Husimi reconstruction formula proven above.
Note that the Gaussian differential expansion and the final single exponential are formal expressions whose power-series convergence is not guaranteed in general. 

Finally, let us highlight that the Husimi function is in fact the only member of the family for which the reconstruction involves finitely many derivatives.
For $s\neq-1$, the factor $e^{(s+1)\partial_\alpha\partial_{\alpha^\ast}/2}$ in Eq.~\eqref{eq:formal-s-reconstruction} is a differential operator of infinite order, and it cannot be traded for one of finite order.
To see this, consider a coherent state $\ket\beta$: for $s<1$, its distribution $W^{(s)}$ is proportional to $e^{-2|\alpha-\beta|^2/(1-s)}$, a Gaussian whose width depends on $s$, while $Q(\alpha)=e^{-|\alpha-\beta|^2}$; for $s\geq1$, $W^{(s)}$ is not even a function.
A differential operator of finite order multiplies a Gaussian by a polynomial but cannot change its width, so no such operator maps $W^{(s)}$ to $Q$ unless $s=-1$.

\section{Wronskian representation of displaced Fock minors}
\label{app:wronskian_representation}

In this appendix, we relate displaced Fock minors to Wronskian determinants of the Bargmann kernel.
We then show that the determinant of order $n$ is not identically zero if and only if the Fock matrix has rank at least $n$.
These results will be used to prove the rank characterization in Appendix~\ref{app:rank-psd-theorems} and the differential recurrence in Appendix~\ref{app:differential_recurrence}.

For $n\geq1$ and holomorphic functions $f_0,\ldots,f_{n-1}$ of one complex variable $z$, we denote their Wronskian by
\begin{align}
    \operatorname{Wr}[f_0,\ldots,f_{n-1}](z)
    \coloneqq
    \det\!\left[
        \partial_z^i f_j(z)
    \right]_{i,j=0}^{n-1}.
\end{align}
For functions of two variables, we use the bi-directional Wronskian determinant~\cite{Ma2011-ee}, defined as follows.

\begin{definition}[Bi-directional Wronskian]
\label{def:biwronskian}
For a jointly entire function $F(z,w)$ of two independent complex variables, define
\begin{align}
    \tau_n[F](z,w)
    \coloneqq
    \det\!\left[
        \partial_z^i\partial_w^jF(z,w)
    \right]_{i,j=0}^{n-1},
    \qquad n\geq1,
\end{align}
with $\tau_0[F]\coloneqq1$.
\end{definition}

The following lemma gives the precise relation between the bi-directional Wronskians of the Bargmann kernel and the displaced Fock minors.

\begin{lemma}[Wronskian representation of displaced Fock minors]
\label{lem:fock_minors_biwronskians}
Let $\hat O$ be a Bargmann-entire operator with Bargmann kernel $K$.
Then, for every $n\geq1$ and every $\alpha\in\mathbb C$,
\begin{align}
    \tau_n[K](\alpha^*,\alpha)
    =
    \left(
        \prod_{k=0}^{n-1}k!
    \right)
    e^{n|\alpha|^2}
    M_{n}(\alpha).
    \label{eq:M_tau_relation}
\end{align}
\end{lemma}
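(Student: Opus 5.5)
We start from Lemma~\ref{lem:displaced-elements-kernel}, which writes every displaced entry as
\begin{align}
O^{(\alpha)}_{ij}=\frac{e^{|\alpha|^2}}{\sqrt{i!j!}}\,\partial_z^i\partial_w^j\!\left[e^{-\alpha z-\alpha^* w}K(z,w)\right]\Big|_{z=\alpha^*,w=\alpha}.
\end{align}
The scalar factors pull out of the determinant. The factor $1/\sqrt{i!}$ comes out of row $i$ and $1/\sqrt{j!}$ out of column $j$, which together give $\prod_{k=0}^{n-1}1/k!$. The factor $e^{|\alpha|^2}$ comes out of each of the $n$ rows, giving $e^{n|\alpha|^2}$. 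Hence
\begin{align}
M_n(\alpha)=\frac{e^{n|\alpha|^2}}{\prod_k k!}\,\tau_n[G](\alpha^*,\alpha),\qquad G(z,w)\coloneqq e^{-\alpha z-\alpha^* w}K(z,w),
\end{align}
where $\alpha$ is treated as a fixed parameter inside $G$. The lemma therefore reduces to a gauge identity for bi-directional Wronskians,
\begin{align}
\tau_n[e^{a(z)+b(w)}F]=e^{n(a+b)}\,\tau_n[F],
\end{align}
applied with $a(z)=-\alpha z$ and $b(w)=-\alpha^* w$. At $(z,w)=(\alpha^*,\alpha)$ the prefactor is $e^{n(a+b)}=e^{-2n|\alpha|^2}$. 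Combining, $\tau_n[K](\alpha^*,\alpha)=e^{2n|\alpha|^2}\tau_n[G](\alpha^*,\alpha)=\left(\prod_k k!\right)e^{n|\alpha|^2}M_n(\alpha)$, which is exactly Eq.~\eqref{eq:M_tau_relation}.

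To prove the gauge identity, write $f=e^{a(z)}$ and use Leibniz's rule:
\begin{align}
\partial_z^i(fF)=\sum_{k\le i}\binom{i}{k}\,(\partial_z^{i-k}f)\,\partial_z^kF .
\end{align}
So row $i$ of the matrix $[\partial_z^i\partial_w^j(fF)]$ equals $f$ times row $i$ of $[\partial_z^i\partial_w^jF]$ plus a combination of rows $k<i$ with coefficients depending only on $z$. The same coefficients work for every column $j$ because $f$ does not depend on $w$. The row-transformation matrix is therefore lower triangular with diagonal entries $f$, and its determinant is $f^n$. The same argument applied to columns with $g=e^{b(w)}$ gives an upper triangular factor with determinant $g^n$. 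Hence $\tau_n[fgF]=f^ng^n\,\tau_n[F]$.

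The main obstacle is bookkeeping rather than analysis. One has to keep the exponential's dependence on $\alpha$ as a frozen parameter, separate from the differentiation variables $z,w$, and check that the two exponentials $e^{|\alpha|^2}$ (from the lemma) and $e^{-2|\alpha|^2}$ (from the gauge factor at the evaluation point) combine to $e^{-n|\alpha|^2}$ on the $M_n$ side. No convergence issues arise: $K$ is jointly entire, so all derivatives exist, and Lemma~\ref{lem:displaced-elements-kernel} already establishes the entrywise formula.
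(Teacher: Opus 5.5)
Your proposal is correct and follows the same route as the paper. Like the paper, you pull the factors $e^{|\alpha|^2}/\sqrt{i!j!}$ out of the determinant to get $\tau_n$ of the gauged kernel $e^{-\alpha z-\alpha^* w}K$, then strip the gauge factor with the multiplicative Wronskian identity in each variable, which gives $e^{-2n|\alpha|^2}$ at $(\alpha^*,\alpha)$. The only difference is that you prove that identity directly, via Leibniz's rule and a triangular factorization $L\,[\partial_z^k\partial_w^l F]\,U$, whereas the paper cites it from Braunstein.
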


\begin{proof}
For fixed $\alpha$, introduce the modified kernel
\begin{align}
    \widetilde K_\alpha(z,w)
    \coloneqq
    e^{-\alpha z-\alpha^*w}K(z,w).
\end{align}
Taking the determinant in Eq.~\eqref{eq:displaced-fock-from-kernel} gives
\begin{align}
    M_{n}(\alpha)
    =
    \frac{e^{n|\alpha|^2}}
         {\prod_{k=0}^{n-1}k!}\,
    \tau_n[\widetilde K_\alpha](\alpha^*,\alpha).
    \label{eq:minor_modified_biwronskian}
\end{align}
Applying the multiplicative Wronskian identity~\cite[Eq.~(14)]{Braunstein1998} successively in $z$ and $w$ yields
\begin{align}
    \tau_n[\widetilde K_\alpha](z,w)
    =
    e^{-n(\alpha z+\alpha^*w)}
    \tau_n[K](z,w).
\end{align}
Evaluating at $(z,w)=(\alpha^*,\alpha)$ and substituting into Eq.~\eqref{eq:minor_modified_biwronskian} proves Eq.~\eqref{eq:M_tau_relation}.
\end{proof}

We now show that the nonvanishing of these determinants characterizes the rank of the Fock matrix.

\begin{lemma}[Rank characterization by bi-directional Wronskians]
\label{lem:bargmann_biwronskian_rank}
Let $\hat O$ be a Bargmann-entire operator with Bargmann kernel $K$.
For every $n\geq1$,
\begin{align}
    \operatorname{rank}(\hat O)\geq n
    \quad\Leftrightarrow\quad
    \tau_n[K]\not\equiv0.
\end{align}
\end{lemma}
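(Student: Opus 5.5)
The plan is to write $\tau_n[K]$ as a Cauchy--Binet sum over minors of the Fock matrix, weighted by Vandermonde-type factors, and then use linear independence of distinct monomials. Expanding $K(z,w)=\sum_{p,q}c_{pq}z^pw^q$ with $c_{pq}=O_{pq}/\sqrt{p!q!}$, we have
\begin{align}
\partial_z^i\partial_w^jK(z,w)=\sum_{p,q}\frac{p!}{(p-i)!}\frac{q!}{(q-j)!}c_{pq}z^{p-i}w^{q-j}.
\end{align}
This factorizes as a product of three infinite matrices: $A(z)_{ip}=\frac{p!}{(p-i)!}z^{p-i}$ (an $n\times\infty$ matrix), $C=(c_{pq})$, and $B(w)_{qj}=\frac{q!}{(q-j)!}w^{q-j}$. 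By Cauchy--Binet, applied first to truncations $p,q<N$ and then passing to the limit using absolute convergence (as in Lemma~\ref{lem:displaced-elements}, since $K$ is jointly entire),
\begin{align}
\tau_n[K](z,w)=\sum_{P,Q}\det A_{:,P}(z)\,\det C_{P,Q}\,\det B_{Q,:}(w),
\end{align}
with the sum running over $n$-subsets $P=\{p_1<\dots<p_n\}$ and $Q$.

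The key computation is that $\det A_{:,P}(z)=\det[\frac{p_k!}{(p_k-i)!}]\,z^{\sum p_k-\binom n2}$. Each column is homogeneous: column $k$ carries $z^{p_k-i}$ in row $i$, so after pulling out $z^{p_k}$ and $z^{-i}$ the remaining constant determinant is the Vandermonde $\prod_{k<l}(p_l-p_k)$ (falling factorials are monic polynomials in $p$ of degree $i$). This factor is nonzero for distinct $p_k$. Grouping by total degrees $s=\sum p_k$ and $t=\sum q_k$ gives
\begin{align}
\tau_n[K](z,w)=\sum_{s,t} z^{s-\binom n2}w^{t-\binom n2}\sum_{|P|_1=s,\,|Q|_1=t}V_PV_Q\det C_{P,Q}.
\end{align}
Hence $\tau_n[K]\equiv0$ if and only if, for every $(s,t)$, the weighted sum of $n\times n$ minors vanishes.

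The direction ($\Leftarrow$) is the easy one. If $\tau_n\not\equiv0$, some coefficient is nonzero, so some minor $\det C_{P,Q}\neq0$. Since $C=D O D$ with $D$ the invertible diagonal matrix $\operatorname{diag}(1/\sqrt{p!})$, the corresponding minor of $O$ is also nonzero, and therefore $\operatorname{rank}\hat O\geq n$.

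For ($\Rightarrow$), which I expect to be the main obstacle, the coefficient sums mix many minors, so a nonzero minor of $O$ does not obviously survive. The cleanest fix is to avoid cancellation by using a lexicographically extremal choice. Take the lexicographically smallest $P$ such that rows $P$ of $C$ have rank $n$, which is obtained greedily by choosing the first $n$ linearly independent rows. Given this $P$, take the lexicographically smallest $Q$ with $\det C_{P,Q}\neq0$. This alone does not isolate a coefficient, because other pairs can share the same degree sums. So I would instead use a generic-weight argument: replace the degree grading by a finer one, substituting $z\mapsto$ and considering $\tau_n$ restricted to the family $K_\lambda(z,w)=K(\lambda z,\mu w)$. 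That does not separate subsets either.

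Failing that, the fallback is the classical Wronskian route cited in the text (Bostan--Dumas). Write $\tau_n[K](z,w)=\det[\partial_z^i g_j(z)]$ in a suitable sense, where the columns $g_j=\partial_w^jK(\cdot,w)$ are viewed for fixed $w$. Then $\tau_n\equiv0$ means that for each $w$ the functions $g_0,\dots,g_{n-1}$ of $z$ are linearly dependent, i.e.\ their $z$-Wronskian vanishes. Conversely, for holomorphic functions a vanishing Wronskian implies linear dependence. Applying this twice, first in $z$ and then in $w$, shows that $\tau_n\equiv0$ forces the span of the row functions $z\mapsto\sum_q c_{pq}\cdots$ to have dimension $<n$, i.e.\ $\operatorname{rank} C<n$. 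Concretely: if $\operatorname{rank}\hat O\geq n$, choose $n$ rows $p_1,\dots,p_n$ and columns giving a nonzero minor. Then $K(z,w)=\sum_{k}u_k(z)v_k(w)$ has "tensor rank" at least $n$, so one can find $w_1,\dots,w_n$ making $K(\cdot,w_l)$ independent, and $z$-points likewise. The Bostan--Dumas criterion then gives a nonvanishing generalized Wronskian. Making the passage from the finite-rank tensor picture to $\tau_n$ precise, via a limiting argument or by expanding $K$ along rank-one pieces and using multilinearity of $\tau_n$ in a Cauchy--Binet form $\tau_n[\sum_k u_k\otimes v_k]=\sum_{|S|=n}\operatorname{Wr}[u_S]\operatorname{Wr}[v_S]\det(\cdot)$, is the step I expect to require the most care.
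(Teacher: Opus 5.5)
You have a genuine gap in the $(\Rightarrow)$ direction. Your $(\Leftarrow)$ half is correct and differs from the paper. The Cauchy--Binet expansion with Vandermonde weights writes each Taylor coefficient of $\tau_n[K]$ at the origin as a finite weighted sum of $n\times n$ minors of $O$. Only finitely many $P,Q$ have given degree sums, so this reduces to the finite Cauchy--Binet formula. The paper instead deduces independence of $h_i(w)=\partial_z^iK(z_0,w)$ from $\tau_n[K](z_0,w_0)\neq0$ and extracts a nonzero minor from their Taylor matrix.

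For $(\Rightarrow)$, nothing in the proposal shows that $\operatorname{rank}\hat O\geq n$ forces $\tau_n[K]\not\equiv0$. As you note yourself, the graded sums $\sum_{|P|_1=s,|Q|_1=t}V_PV_Q\det C_{P,Q}$ may cancel, and neither the lexicographic choice nor rescaling isolates a single term. The fallback is only asserted. The phrase ``applying this twice'' never says which family of functions the second Wronskian criterion is applied to, or why its Wronskian vanishes. The tensor-rank remark gives a nonzero determinant $\det[K(z_k,w_l)]$ at separated points. You give no argument that transfers this nonvanishing to the single-point mixed-derivative determinant $\tau_n$, whose confluent limit could a priori vanish.

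The missing idea is a bridge between the two one-variable Wronskian criteria. First, $n$ entire functions are independent if and only if their $n\times\infty$ Taylor-coefficient matrix has rank $n$. Second, here the $n\times n$ minors of that matrix are themselves Wronskians, in the other variable, of the column (or row) functions of the Fock matrix.

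The paper runs this forward:
\begin{enumerate}
\item Set $f_q(z)=\partial_w^qK(z,0)/\sqrt{q!}$. Rank $\geq n$ gives independent $f_{q_1},\dots,f_{q_n}$, hence a point $z_0$ with $\det[\partial_z^if_{q_j}(z_0)]\neq0$.
\item Since $\partial_w^qh_i(0)/\sqrt{q!}=\partial_z^if_q(z_0)$, the Taylor matrix of $(h_i)$ contains this nonzero minor, so the $h_i$ are independent.
\item The Wronskian criterion in $w$ then gives $w_0$ with $\tau_n[K](z_0,w_0)=\operatorname{Wr}[h_0,\dots,h_{n-1}](w_0)\neq0$.
\end{enumerate}

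Your contrapositive sketch can be completed the same way:
\begin{enumerate}
\item If $\tau_n\equiv0$, then for each fixed $w$ the functions $g_j=\partial_w^jK(\cdot,w)$ are dependent.
\item Hence every minor $\det[\partial_z^{p_k}\partial_w^jK(0,w)]$ of their Taylor matrix at $z=0$ vanishes, with rows $p_1<\dots<p_n$ and columns $0\leq j<n$.
\item This minor is the $w$-Wronskian of the row functions $r_{p_k}(w)=\partial_z^{p_k}K(0,w)$. It vanishes for all $w$, so by the criterion every $n$ rows of $O$ are dependent, and $\operatorname{rank}\hat O<n$.
\end{enumerate}
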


\begin{proof}
Write $O_{pq}=\langle p|\hat O|q\rangle$ and, for every $q\geq0$, define the entire function
\begin{align}
    f_q(z)
    \coloneqq
    \frac{1}{\sqrt{q!}}\,\partial_w^qK(z,0)
    =
    \sum_{p\geq0}
    \frac{O_{pq}}{\sqrt{p!}}\,z^p.
\end{align}
Linear relations over $\mathbb C$ among the functions $f_q$ are exactly linear relations among the corresponding columns of the Fock matrix.
Consequently, $\operatorname{rank}(\hat O)\geq n$ if and only if some $n$ functions $f_{q_1},\ldots,f_{q_n}$ are linearly independent.

Suppose first that $\operatorname{rank}(\hat O)\geq n$, and choose such a family.
By the Wronskian criterion for holomorphic functions~\cite{Bostan2010}, there exists $z_0\in\mathbb C$ such that
\begin{align}
    \det\!\left[
        \partial_z^i f_{q_j}(z_0)
    \right]_{\substack{0\leq i<n\\1\leq j\leq n}}
    \neq0.
    \label{eq:independent-column-wronskian}
\end{align}
For this $z_0$, define the entire functions
\begin{align}
    h_i(w)\coloneqq\partial_z^iK(z_0,w),
    \qquad 0\leq i<n.
\end{align}
Since
\begin{align}
    \frac{1}{\sqrt{q!}}\,\partial_w^q h_i(0)
    =
    \partial_z^i f_q(z_0),
    \label{eq:column-function-derivatives}
\end{align}
the nonzero determinant in Eq.~\eqref{eq:independent-column-wronskian} implies that $h_0,\ldots,h_{n-1}$ are linearly independent.
Applying the same Wronskian criterion to this family gives a point $w_0\in\mathbb C$ such that
\begin{align}
    0
    &\neq
    \det\!\left[
        \partial_w^j h_i(w_0)
    \right]_{i,j=0}^{n-1}
    \nonumber\\
    &=
    \det\!\left[
        \partial_z^i\partial_w^jK(z_0,w_0)
    \right]_{i,j=0}^{n-1}
    =
    \tau_n[K](z_0,w_0).
\end{align}
Thus $\tau_n[K]\not\equiv0$.

Conversely, suppose that $\tau_n[K](z_0,w_0)\neq0$ for some $(z_0,w_0)\in\mathbb C^2$.
Then the functions $h_i(w)=\partial_z^iK(z_0,w)$, $0\leq i<n$, are linearly independent.
Since these functions are entire, the matrix of their Taylor coefficients at $w=0$ has rank $n$.
Using Eq.~\eqref{eq:column-function-derivatives}, we can therefore choose indices $q_1,\ldots,q_n$ for which Eq.~\eqref{eq:independent-column-wronskian} holds.
This implies that $f_{q_1},\ldots,f_{q_n}$ are linearly independent, as are the corresponding columns of the Fock matrix.
Hence $\operatorname{rank}(\hat O)\geq n$.
\end{proof}

\section{Rank and positivity from displaced Fock minors}
\label{app:rank-psd-theorems}

In this appendix, we prove Theorems~\ref{th:rank} and~\ref{th:psd}, which characterize the rank and positivity of Bargmann-entire operators through their displaced Fock minors.
We begin with the rank characterization, using the Wronskian results from Appendix~\ref{app:wronskian_representation} and the real-analyticity of the minors established in Appendix~\ref{app:bargmann-entire-forms}.

\begin{proof}[Proof of Theorem~\ref{th:rank}]
Fix $n\geq1$, and let $K$ be the Bargmann kernel of $\hat O$.
By Lemma~\ref{lem:bargmann_biwronskian_rank}, condition $\mathrm{(i)}$ is equivalent to $\tau_n[K]\not\equiv0$.
An entire function of two complex variables that vanishes at every point $(\alpha^*,\alpha)$, $\alpha\in\mathbb C$, must vanish identically~\cite[Proposition II.4.7]{BochnerMartin1948}.
Together with Lemma~\ref{lem:fock_minors_biwronskians}, this gives
\begin{align}
    \operatorname{rank}(\hat O)\geq n
    &\quad\Leftrightarrow\quad
    \tau_n[K]\not\equiv0
    \nonumber\\
    &\quad\Leftrightarrow\quad
    M_{n}\not\equiv0.
\end{align}
Thus $\mathrm{(i)}$ and $\mathrm{(ii)}$ are equivalent.

We next prove $\mathrm{(ii)}\Rightarrow\mathrm{(iii)}$.
By Appendix~\ref{app:bargmann-entire-forms}, the function $\alpha\mapsto M_{n}(\alpha)$ is real-analytic.
Under condition $\mathrm{(ii)}$, its squared modulus is a real-analytic function that is not identically zero.
Its zero set therefore has Lebesgue measure zero~\cite{Mityagin2020}.
Since $M_{n}$ and its squared modulus have the same zeros, condition $\mathrm{(iii)}$ follows.

Finally, $\mathrm{(iii)}$ implies $\mathrm{(ii)}$, completing the proof.
\end{proof}

We now use the rank characterization to establish the positivity criterion.

\begin{proof}[Proof of Theorem~\ref{th:psd}]
From Appendix~\ref{app:bargmann-entire-forms}, recall the notation $O^{(\alpha)}_{mn}=\bra{m}\hat{D}^{\dagger}(\alpha)\hat{O}\hat{D}(\alpha)\ket{n}$.
For $N\geq1$ and $\alpha\in\mathbb C$, define the $N\times N$ matrix:
\begin{align}
    B_N(\alpha)
    \coloneqq
    \left[
        O_{ij}^{(\alpha)}
    \right]_{i,j=0}^{N-1},
\end{align}
so that $M_N(\alpha)=\det B_N(\alpha)$.
For $L\geq1$, also set
\begin{align}
    O_L
    &\coloneqq
    \left[O_{pq}\right]_{p,q=0}^{L-1},
    \\
    T_{L,N}(\alpha)
    &\coloneqq
    \left[
        \bra q\hat D(\alpha)\ket j
    \right]_{\substack{0\leq q<L\\0\leq j<N}}.
\end{align}
By Lemma~\ref{lem:displaced-elements},
\begin{align}
    B_N(\alpha)
    =
    \lim_{L\to\infty}
    T_{L,N}(\alpha)^\dagger\;
    O_L\;
    T_{L,N}(\alpha),
    \label{eq:displaced-block-truncation}
\end{align}
where the limit holds entrywise.
Since $\hat O$ is Hermitian, all matrices $B_N(\alpha)$ are Hermitian.

Suppose first that $\hat O\succeq0$.
Then every $O_L$ is positive semidefinite, as is each matrix on the right-hand side of Eq.~\eqref{eq:displaced-block-truncation}.
Passing to the limit gives $B_N(\alpha)\succeq0$, and hence $M_N(\alpha)\geq0$ for every $N$ and $\alpha$.

Conversely, suppose that $M_N(\alpha)\geq0$ for every $N$ and $\alpha$, and let $r=\operatorname{rank}(\hat O)$.
If $r=0$, then $\hat O=0$ and the conclusion is immediate.
We therefore assume $r>0$.

By Theorem~\ref{th:rank}, the zero set of $M_n$ has Lebesgue measure zero for every $1\leq n\leq r$.
Since there are at most countably many such zero sets, their union has measure zero, and its complement $E\subset\mathbb C$ has full measure.
On $E$, all these minors are nonzero and hence strictly positive, since they are nonnegative by assumption:
\begin{align}
    M_n(\alpha)>0
    \qquad
    \forall\,\alpha\in E,
    \quad
    \forall\,1\leq n\leq r.
    \label{eq:positive_principal_minors}
\end{align}

Suppose first that $r<\infty$.
Each matrix on the right-hand side of Eq.~\eqref{eq:displaced-block-truncation} has rank at most $r$.
This bound is preserved in the limit, since all minors of order $r+1$ vanish and determinants depend continuously on the matrix entries.
Hence $\operatorname{rank}B_N(\alpha)\leq r$ for every $N$ and $\alpha$.

Fix $\alpha\in E$ and write $B_N=B_N(\alpha)$.
By Sylvester's criterion, $B_r\succ0$.
For $N>r$, the Hermitian matrix $B_N$ has rank at most $r$ and contains $B_r\succ0$ as a principal submatrix.
By Cauchy's interlacing theorem~\cite[Thm.~4.3.17]{Horn2012-se}, its $r$ largest eigenvalues are bounded below by those of $B_r$, and are therefore strictly positive.
Since $B_N$ has at most $r$ nonzero eigenvalues, all its remaining eigenvalues vanish, and $B_N\succeq0$.

Suppose now that $r=\infty$, and fix $N\geq1$ and $\alpha\in E$.
The leading principal minors of $B_N(\alpha)$ are $M_1(\alpha),\ldots,M_N(\alpha)$, all of which are strictly positive by Eq.~\eqref{eq:positive_principal_minors}.
Sylvester's criterion therefore gives $B_N(\alpha)\succ0$.

In both cases, $B_N(\alpha)\succeq0$ for every $N\geq1$ and every $\alpha\in E$.
Since $E$ is dense and each $B_N$ is continuous by Appendix~\ref{app:bargmann-entire-forms}, it follows that
\begin{align}
    B_N(0)\succeq0
    \qquad
    \forall\,N\geq1.
\end{align}
Every finite Fock superposition is supported in one of these initial blocks.
Hence $O[\psi,\psi]\geq0$ for every $\psi\in\mathcal H_{\mathrm{fin}}$, proving $\hat O\succeq0$.
\end{proof}

We finally prove the spectral bounds quoted in the main text.

\begin{lemma}[Monotonicity and spectral bounds]
\label{lem:fock_minor_bounds}
Let $\hat\rho$ be a density operator with displaced Fock minors $\{M_n\}_{n\geq0}$, and let $\lambda_1^\downarrow(\hat\rho)\geq\lambda_2^\downarrow(\hat\rho)\geq\cdots$ denote its eigenvalues in nonincreasing order, counted with multiplicity and extended by zeros if $\hat\rho$ has finite rank.
For every $\alpha\in\mathbb C$,
\begin{align}
    1=M_0(\alpha)\geq M_1(\alpha)\geq M_2(\alpha)\geq\cdots\geq0,
    \label{eq:app_monotonicity}
\end{align}
and, for every $n\geq1$,
\begin{align}
    M_n(\alpha)\leq\prod_{k=1}^{n}\lambda_k^\downarrow(\hat\rho).
    \label{eq:app_spectral_bound}
\end{align}
\end{lemma}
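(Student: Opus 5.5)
We work throughout with the finite displaced blocks $B_n(\alpha)=[\bra i\hat D^\dagger(\alpha)\hat\rho\hat D(\alpha)\ket j]_{i,j=0}^{n-1}$ from the proof of Theorem~\ref{th:psd}, so that $M_n(\alpha)=\det B_n(\alpha)$. Since $\hat\rho$ is a bounded, positive, trace-one Hilbert-space operator, $\hat\rho_\alpha\coloneqq\hat D^\dagger(\alpha)\hat\rho\hat D(\alpha)$ is unitarily equivalent to $\hat\rho$. It is therefore a density operator with the same spectrum $\{\lambda_k^\downarrow(\hat\rho)\}$. Moreover, $B_n(\alpha)=P_n\hat\rho_\alpha P_n$ restricted to $\operatorname{span}\{\ket0,\ldots,\ket{n-1}\}$, where $P_n$ is the projector onto that span. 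Both claims then reduce to finite-dimensional matrix inequalities applied at a fixed $\alpha$, so $\alpha$ plays no further role.

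\textbf{Spectral bound.} We apply Cauchy interlacing for compressions of a compact self-adjoint operator. The $k$th largest eigenvalue $\mu_k$ of $B_n(\alpha)$ satisfies $0\leq\mu_k\leq\lambda_k^\downarrow(\hat\rho)$. This follows from the Courant--Fischer min-max characterization: any $k$-dimensional subspace of $\operatorname{ran}P_n$ is also a $k$-dimensional subspace of $\mathcal H$, and $P_n\hat\rho_\alpha P_n$ agrees with $\hat\rho_\alpha$ as a quadratic form on it. Nonnegativity of the $\mu_k$ comes from $\hat\rho_\alpha\succeq0$, which was already established in the forward direction of Theorem~\ref{th:psd}. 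Taking the product of the $n$ eigenvalues gives $M_n(\alpha)=\prod_{k=1}^n\mu_k\leq\prod_{k=1}^n\lambda_k^\downarrow(\hat\rho)$.

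\textbf{Monotonicity.} We would use Fischer's inequality for a positive semidefinite block matrix, $\det\begin{pmatrix}A&b\\b^\dagger&c\end{pmatrix}\leq\det A\cdot c$. Applied to $B_{n+1}(\alpha)$, with $A=B_n(\alpha)$ and $c=\bra n\hat\rho_\alpha\ket n$, it gives
\begin{align}
M_{n+1}(\alpha)\leq M_n(\alpha)\,\bra n\hat\rho_\alpha\ket n .
\end{align}
Since $\bra n\hat\rho_\alpha\ket n\leq\operatorname{Tr}\hat\rho_\alpha=1$ and $M_n(\alpha)\geq0$, this yields $M_{n+1}(\alpha)\leq M_n(\alpha)$. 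The base case is $M_1(\alpha)=Q(\alpha)\leq1=M_0(\alpha)$. The lower bound $M_n\geq0$ is Theorem~\ref{th:psd}. Fischer's inequality itself follows from the Schur complement. If $A\succ0$, then $\det B_{n+1}=\det A\,(c-b^\dagger A^{-1}b)\leq\det A\cdot c$. If $A$ is singular, then $B_{n+1}$ is singular as well, because a positive semidefinite matrix with a singular principal block is singular; hence $\det B_{n+1}=0$ and the inequality holds trivially. Alternatively, the singular case can be handled by continuity, replacing $A$ by $A+\epsilon$.

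The main point requiring care is the interlacing step in infinite dimensions. Here $\hat\rho_\alpha$ is trace class, hence compact, so min-max holds with the eigenvalues $\lambda_k^\downarrow$ padded by zeros as in the statement. If one prefers to stay finite-dimensional, an alternative is to truncate to $P_L\hat\rho_\alpha P_L$ with $L\geq n$, apply standard finite interlacing twice, and then use $\lambda_k^\downarrow(P_L\hat\rho_\alpha P_L)\leq\lambda_k^\downarrow(\hat\rho)$, which is again a consequence of min-max. Everything else is routine.
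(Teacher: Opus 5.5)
Your proof is correct. The spectral bound is handled as in the paper: min--max for the compression of the compact operator $\hat\rho_\alpha$ onto $\operatorname{span}\{\ket0,\ldots,\ket{n-1}\}$. For monotonicity, however, you take a different route.

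The paper applies Cauchy interlacing between $B_n$ and $B_{n+1}$ and uses $\lambda_1^\downarrow(B_{n+1})\leq1$. Both claims therefore come from one eigenvalue-interlacing principle. You instead use Fischer's inequality via the Schur complement, together with $\bra n\hat\rho_\alpha\ket n\leq1$.

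Your version is more elementary, since it needs no ordering of eigenvalues. It also gives a slightly sharper intermediate estimate:
\begin{itemize}
\item interlacing only yields $M_{n+1}\leq\lambda_1^\downarrow(B_{n+1})\,M_n$;
\item your route gives $M_{n+1}\leq\bra n\hat\rho_\alpha\ket n\,M_n$, and a diagonal entry never exceeds the largest eigenvalue.
\end{itemize}
Moreover, when $M_n>0$ the Schur complement gives the exact identity $M_{n+1}/M_n=c-b^\dagger B_n^{-1}b$. This is the squared distance of $\hat\rho_\alpha^{1/2}\ket n$ from $\operatorname{span}\{\hat\rho_\alpha^{1/2}\ket0,\ldots,\hat\rho_\alpha^{1/2}\ket{n-1}\}$, which gives a geometric meaning to the paper's Toda coordinates $e^{-X_n}=M_{n+1}/M_n$.

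The cost of your route is that the singular case $M_n=0$ must be treated separately. You do this correctly, by observing that a positive semidefinite matrix with a singular principal block is itself singular.
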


\begin{proof}
Both statements follow from the interlacing of the eigenvalues of a compression with those of the compressed operator.
Fix $\alpha\in\mathbb C$, and let $B_n\coloneqq B_n(\alpha)$ be the matrices of the proof of Theorem~\ref{th:psd}, so that $M_n(\alpha)=\det B_n$.
Since $0\preceq\hat\rho\preceq\hat{\mathds 1}$, the same holds for $\hat\rho_\alpha\coloneqq\hat D^\dagger(\alpha)\hat\rho\hat D(\alpha)$ and hence for each of its compressions $B_n$, whose eigenvalues therefore lie in $[0,1]$.
In particular, $M_n(\alpha)\geq0$ and $M_1(\alpha)\leq1=M_0(\alpha)$.

Since $B_n$ is the leading principal submatrix of $B_{n+1}$, Cauchy's interlacing theorem~\cite[Thm.~4.3.17]{Horn2012-se} gives $\lambda_{k+1}^\downarrow(B_{n+1})\leq\lambda_k^\downarrow(B_n)$ for $1\leq k\leq n$.
Together with $\lambda_1^\downarrow(B_{n+1})\leq1$, this yields
\begin{align}
    M_{n+1}(\alpha)
    &=\lambda_1^\downarrow(B_{n+1})\prod_{k=1}^{n}\lambda_{k+1}^\downarrow(B_{n+1})
    \nonumber
    \\&\leq\prod_{k=1}^{n}\lambda_k^\downarrow(B_n)
    =M_n(\alpha),
\end{align}
which proves Eq.~\eqref{eq:app_monotonicity}.

Since $B_n$ is the compression of the positive compact operator $\hat\rho_\alpha$ to $\operatorname{span}\{\ket0,\ldots,\ket{n-1}\}$, the min--max principle applied to $-\hat\rho_\alpha$~\cite[Sec.~XIII.1]{Reed1978-so} gives $\lambda_k^\downarrow(B_n)\leq\lambda_k^\downarrow(\hat\rho_\alpha)=\lambda_k^\downarrow(\hat\rho)$ for $1\leq k\leq n$, the last equality because $\hat\rho_\alpha$ and $\hat\rho$ are unitarily equivalent.
Multiplying these $n$ inequalities between nonnegative numbers proves Eq.~\eqref{eq:app_spectral_bound}.
\end{proof}

\section{Differential characterization of valid Husimi functions}
\label{app:husimi-admissibility}

In this appendix, we prove Theorem~\ref{th:admissibility}.
Necessity follows directly from the properties of density operators; for sufficiency, we reconstruct a candidate operator from the Bargmann kernel of condition~$\mathrm{(i)}$, and show that conditions~$\mathrm{(ii)}$--$\mathrm{(iv)}$ make it Hermitian, positive semidefinite, and of unit trace.

\begin{proof}[Proof of Theorem~\ref{th:admissibility}]
Suppose first that $Q(\alpha)=\bra{\alpha}\hat\rho\ket{\alpha}$ for a quantum state $\hat\rho$.
Since $\hat\rho$ is bounded, its Fock matrix entries define a jointly entire Bargmann kernel $K$.
The identity $Q(\alpha)=e^{-\alpha\alpha^\ast}K(\alpha^\ast,\alpha)$ then establishes condition~$\mathrm{(i)}$.
Hermiticity of $\hat\rho$ implies that $Q$ is real-valued, giving condition~$\mathrm{(ii)}$.
The coherent-state resolution of the identity and $\operatorname{Tr}\hat\rho=1$ give condition~$\mathrm{(iii)}$.
Finally, the reconstruction formula~\eqref{eq:husimi_reconstruction} identifies $\mathcal M_n[Q]$ with the displaced Fock minors of $\hat\rho$, which are nonnegative because $\hat\rho\succeq0$.
Thus condition~$\mathrm{(iv)}$ also holds.

Conversely, suppose that $Q$ satisfies conditions~$\mathrm{(i)}$--$\mathrm{(iv)}$.
Let $K$ be the jointly entire function of condition~$\mathrm{(i)}$, and define the Fock matrix $(O_{mn})_{m,n\geq0}$ through its Taylor expansion:
\begin{align}
K(z,w)
=
\sum_{m,n\geq0}
\frac{O_{mn}}{\sqrt{m!n!}}\,z^m w^n.
\end{align}
The associated operator $\hat O_Q\coloneqq\sum_{mn}O_{mn}\ket{m}\!\!\bra{n}$ is Bargmann-entire and has Husimi function $Q$.
Condition~$\mathrm{(ii)}$ implies that $K(\alpha^\ast,\alpha)=e^{|\alpha|^2}Q(\alpha)$ is real; comparing Taylor coefficients with those of its complex conjugate gives $O_{mn}=O_{nm}^{\ast}$.
Hence $\hat O_Q$ is Hermitian.

The reconstruction formula identifies $\mathcal M_n[Q]$ with the displaced Fock minors of $\hat O_Q$.
Condition~$\mathrm{(iv)}$ and Theorem~\ref{th:psd} therefore imply $\hat O_Q\succeq0$.

Since $\hat O_Q$ is not yet known to define an operator on $\mathcal H$, we establish the trace identity directly from its Fock matrix.
Since $K$ is jointly entire, its Taylor series converges uniformly on compact subsets of $\mathbb C^2$.
With $\alpha=\sqrt{t}\,e^{i\theta}$, we may therefore integrate term by term in $\theta$ at each fixed $t$, leaving only diagonal terms.
The remaining terms are nonnegative because $O_{nn}\geq0$, so Tonelli's theorem allows their sum to be interchanged with the radial integral.
Condition~$\mathrm{(iii)}$ consequently gives
\begin{align}
1
&=
\int_{\mathbb C}
Q(\alpha)\,\frac{\mathrm d^2\alpha}{\pi}
\nonumber\\
&=
\sum_{n\geq0}
\frac{O_{nn}}{n!}
\int_0^\infty e^{-t}t^n\,\mathrm dt
\nonumber\\
&=
\sum_{n\geq0}O_{nn}.
\end{align}

Since the Fock matrix is positive semidefinite and its diagonal sum is one, it defines a positive trace-class operator on $\mathcal H$ with unit trace.
Thus $\hat O_Q$ is a density operator whose Husimi function is $Q$.

\end{proof}

\section{Differential recurrence for Fock minors}
\label{app:differential_recurrence}

In this appendix, we prove the differential recurrence of Theorem~\ref{th:recurrence}.
The proof relies on a known identity satisfied by bi-directional Wronskians, which we translate into a relation between displaced Fock minors using Lemma~\ref{lem:fock_minors_biwronskians}.
We then show that the recurrence extends to the differential Fock minors of any smooth phase-space function.

\begin{proof}[Proof of Theorem~\ref{th:recurrence}]
Let $K$ be the Bargmann kernel of $\hat O$, and write
$\tau_n\coloneqq\tau_n[K]$.
It was shown in Ref.~\cite{Ma2011-ee}, using the Jacobi identity for determinants, that $\tau_n$ satisfies the bilinear two-dimensional Toda molecule equation
\begin{align}
    \tau_n\partial_z\partial_w\tau_n
    -
    (\partial_z\tau_n)(\partial_w\tau_n)
    =
    \tau_{n+1}\tau_{n-1}.
    \label{eq:toda_biwronskian}
\end{align}

Evaluating along $z=\alpha^*$ and $w=\alpha$, the chain rule gives $\partial_w\tau_n=\partial_\alpha\tau_n$ and $\partial_z\tau_n=\partial_{\alpha^*}\tau_n$.
Hence Eq.~\eqref{eq:toda_biwronskian} becomes
\begin{align}
    \tau_n\partial_\alpha\partial_{\alpha^*}\tau_n
    -
    (\partial_\alpha\tau_n)(\partial_{\alpha^*}\tau_n)
    =
    \tau_{n+1}\tau_{n-1},
    \label{eq:toda_restricted}
\end{align}
where all bi-directional Wronskians are evaluated at $(\alpha^*,\alpha)$.

Lemma~\ref{lem:fock_minors_biwronskians} gives
\begin{align}
    \tau_n[K](\alpha^*,\alpha)
    =
    \left(
        \prod_{k=0}^{n-1}k!
    \right)
    e^{n|\alpha|^2}M_n(\alpha).
\end{align}
Substituting this relation into \eqref{eq:toda_restricted}, using the product rule and $\partial_\alpha\partial_{\alpha^*}|\alpha|^2=1$, we obtain
\begin{align}
    &\left(e^{n|\alpha|^2}M_n\right)
    \partial_\alpha\partial_{\alpha^*}
    \left(e^{n|\alpha|^2}M_n\right)
    \nonumber\\
    &\quad-
    \partial_\alpha\left(e^{n|\alpha|^2}M_n\right)
    \partial_{\alpha^*}\left(e^{n|\alpha|^2}M_n\right)
    \nonumber\\
    &=
    e^{2n|\alpha|^2}
    \Big[
        M_n\partial_\alpha\partial_{\alpha^*}M_n
        -
        (\partial_\alpha M_n)(\partial_{\alpha^*}M_n)
        +
        nM_n^2
    \Big].
    \label{eq:gaussian_bilinear_identity}
\end{align}
Therefore, the left-hand side of
Eq.~\eqref{eq:toda_restricted} is
\begin{align}
    &\left(
        \prod_{k=0}^{n-1}k!
    \right)^2
    e^{2n|\alpha|^2}
    \Big[
        M_n\partial_\alpha\partial_{\alpha^*}M_n
        \nonumber\\
    &\hspace{4em}
        -
        (\partial_\alpha M_n)(\partial_{\alpha^*}M_n)
        +
        nM_n^2
    \Big].
\end{align}

On the other hand, Eq.~\eqref{eq:M_tau_relation} gives
\begin{align}
    \tau_{n+1}
    &=
    n!
    \left(
        \prod_{k=0}^{n-1}k!
    \right)
    e^{(n+1)|\alpha|^2}M_{n+1},
    \\
    \tau_{n-1}
    &=
    \frac{1}{(n-1)!}
    \left(
        \prod_{k=0}^{n-1}k!
    \right)
    e^{(n-1)|\alpha|^2}M_{n-1}.
\end{align}
Consequently,
\begin{align}
    \tau_{n+1}\tau_{n-1}
    =
    n
    \left(
        \prod_{k=0}^{n-1}k!
    \right)^2
    e^{2n|\alpha|^2}
    M_{n+1}M_{n-1}.
\end{align}
Cancelling the common factors in Eq.~\eqref{eq:toda_restricted}
finally yields
\begin{align}
    M_n\partial_\alpha\partial_{\alpha^*}M_n
    \!-\!
    (\partial_\alpha M_n)(\partial_{\alpha^*}M_n)
    \!=\!
    n\left(
        M_{n+1}M_{n-1}\!-\!M_n^2
    \right),
\end{align}
which proves the result.

\end{proof}

Theorem~\ref{th:recurrence} is stated for Fock minors of an operator.
However, the differential Fock minors $\mathcal M_n[Q]$ of Eq.~\eqref{eq:differential_fock_minor} are defined for any smooth function $Q$, whether or not an underlying operator exists.
The following lemma shows that the recurrence still holds in this more general setting, so that the whole hierarchy can be generated from any candidate smooth function $Q$ before even knowing whether it describes a quantum state.

\begin{lemma}[Extension to smooth functions]
\label{lem:smooth_recurrence}
Let $Q$ be a smooth function on $\mathbb C$. For every $n\geq1$, the differential Fock minors of $Q$ satisfy the recurrence:
\begin{align}
    &\mathcal M_n[Q]\,\partial_\alpha\partial_{\alpha^*}\mathcal M_n[Q]
    -\left(\partial_\alpha\mathcal M_n[Q]\right)\left(\partial_{\alpha^*}\mathcal M_n[Q]\right)
    \nonumber\\
    &\qquad=
    n\left(\mathcal M_{n+1}[Q]\,\mathcal M_{n-1}[Q]-\mathcal M_n[Q]^2\right),
\end{align}
at every point of phase space.
\end{lemma}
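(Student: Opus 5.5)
The plan is to reduce the statement to the Bargmann-entire case already settled by Theorem~\ref{th:recurrence}. The key observation is that both sides of the recurrence are \emph{universal polynomial} expressions in finitely many derivatives $\partial_\alpha^a\partial_{\alpha^*}^bQ(\alpha)$, evaluated at a single point. By Eq.~\eqref{eq:differential_fock_minor}, the $(i,j)$ entry of the matrix defining $\mathcal M_n[Q]$ is
\begin{align}
\bra i e^{\partial_{\alpha^*}\hat a^\dagger}e^{\partial_\alpha\hat a}Q\ket j=\sum_{k\le\min(i,j)}c_{ijk}\,\partial_{\alpha^*}^{\,i-k}\partial_\alpha^{\,j-k}Q,
\end{align}
with explicit rational constants $c_{ijk}$ (here $c_{ijk}=\sqrt{i!j!}/(k!(i-k)!(j-k)!)$). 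Applying $\partial_\alpha,\partial_{\alpha^*}$ to such polynomials uses only the Leibniz rule, and the result is again a polynomial in the jet of $Q$. Hence the difference $\mathrm{LHS}-\mathrm{RHS}$ equals $P_n\big(\{\partial_\alpha^a\partial_{\alpha^*}^bQ(\alpha)\}_{a,b\le n}\big)$ for a fixed polynomial $P_n$ in finitely many variables $y_{ab}$, $0\le a,b\le n$. Here we treat the mixed derivatives as independent variables; for smooth $Q$ the Wirtinger derivatives commute, so this jet is well defined.

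The proof therefore reduces to showing that $P_n\equiv0$ as a polynomial. For this, it suffices to show that the values $\{\partial_\alpha^a\partial_{\alpha^*}^bQ(0)\}_{a,b\le n}$ can be prescribed arbitrarily in $\mathbb C^{(n+1)^2}$ by Husimi functions of Bargmann-entire operators. Translation invariance lets us work at $\alpha=0$, or alternatively we use the formula at $0$ directly.

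Take $\hat O$ to be a finite Fock matrix supported on $\{0,\dots,N\}^2$. Its kernel $K$ is then a polynomial, hence jointly entire, and $Q(\alpha)=e^{-\alpha\alpha^*}K(\alpha^*,\alpha)$. Lemma~\ref{lem:husimi-reconstruction} at $\alpha=0$ expresses $O_{ij}$ through the jet of $Q$ at $0$ via an invertible triangular map: the leading term of $O_{ij}$ is $\partial_{\alpha^*}^i\partial_\alpha^jQ(0)/\sqrt{i!j!}$, plus lower-order derivatives. Conversely, the jet $\{\partial_\alpha^a\partial_{\alpha^*}^bQ(0)\}$ is a unitriangular linear image of the matrix entries $\{O_{ab}\sqrt{a!b!}\}$. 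Because this map is linear and triangular with nonzero diagonal, choosing the $(n+1)\times(n+1)$ block of $O$ arbitrarily realizes every jet.

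For each such $\hat O$, Theorem~\ref{th:recurrence} together with $M_k[\hat O]=\mathcal M_k[Q_{\hat O}]$ gives $P_n=0$ on the realized jet. Since every point of $\mathbb C^{(n+1)^2}$ is realized, $P_n$ vanishes everywhere and is the zero polynomial. Consequently the identity holds for every smooth $Q$ at every point.

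The main obstacle is the surjectivity step. It requires checking that the map from Fock entries to mixed derivatives at the origin really is triangular and invertible, including non-Hermitian choices, which is allowed since Theorem~\ref{th:recurrence} needs no Hermiticity. It also requires confirming that no derivative of order exceeding $n$ in either variable enters $P_n$. For the latter, the entries of $\mathcal M_{n+1}$ involve derivatives up to order $n$ in each variable, while the left-hand side adds at most one derivative in each variable to those of $\mathcal M_n$ (order $\le n-1$), so the jet of order $\le n$ in each variable suffices. The triangularity follows by expanding $e^{-\alpha\alpha^*}$ against $K$, or directly from Lemma~\ref{lem:displaced-elements-kernel} at $\alpha=0$.
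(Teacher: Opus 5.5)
Your proposal is correct and follows essentially the same route as the paper. Both arguments observe that the recurrence at a point depends only on the jet $\{\partial_{\alpha^*}^{p}\partial_\alpha^{q}Q\}_{0\le p,q\le n}$ there, realize that jet by the Husimi function of a Bargmann-entire operator, and invoke Theorem~\ref{th:recurrence}. The only difference is the realizing operator: the paper quantizes the Taylor polynomial of $Q$ at $\alpha_0$ in normal order, using $\bra{\alpha}(\hat a^\dagger)^p\hat a^q\ket{\alpha}=(\alpha^*)^p\alpha^q$, which matches the jet exactly and so needs neither your triangular inversion nor the translation-invariance step, whereas your finite Fock matrices at the origin work equally well.
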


\begin{proof}
The idea is that the recurrence at a given point only involves finitely many derivatives of $Q$ at that point.
It is therefore enough to check it for a polynomial with the same derivatives, and such a polynomial is the Husimi function of a Bargmann-entire operator.

Fix $n\geq1$ and a point $\alpha_0\in\mathbb C$.
By Eq.~\eqref{eq:differential_fock_minor}, both sides of the recurrence, evaluated at $\alpha_0$, depend on $Q$ only through the derivatives $\partial_{\alpha^*}^{\,p}\partial_\alpha^{\,q}Q(\alpha_0)$ with $0\leq p,q\leq n$.
Let $P(\alpha^*,\alpha)$ be the Taylor polynomial of $Q$ at $\alpha_0$ that has exactly these derivatives, i.e., with
$\partial_{\alpha^*}^{\,p}\partial_\alpha^{\,q}P(\alpha_0)=\partial_{\alpha^*}^{\,p}\partial_\alpha^{\,q}Q(\alpha_0)$ for all $0\leq p,q\leq n$.

The polynomial $P$ is the Husimi function of an operator.
Indeed, since $\bra{\alpha}(\hat a^\dagger)^p\hat a^q\ket{\alpha}=(\alpha^*)^p\alpha^q$, replacing each monomial $(\alpha^*)^p\alpha^q$ of $P$ by the normally ordered product $(\hat a^\dagger)^p\hat a^q$ yields an operator $\hat P$ with $Q_{\hat P}=P$.
Its Bargmann kernel, $e^{zw}P(z,w)$, is entire, so $\hat P$ is Bargmann-entire.
Theorem~\ref{th:recurrence} applies to $\hat P$, and Lemma~\ref{lem:husimi-reconstruction} identifies its Fock minors with $\mathcal M_k[P]$.
The recurrence thus holds for $\mathcal M_k[P]$, and in particular at $\alpha_0$.

Since $P$ and $Q$ have the same derivatives at $\alpha_0$ up to the order involved, the two sides of the recurrence take the same values for $P$ and for $Q$ at that point.
The recurrence therefore holds for $\mathcal M_k[Q]$ at $\alpha_0$.
As $n$ and $\alpha_0$ are arbitrary, this proves the lemma.
\end{proof}

\section{Stellar factorization of the last nonvanishing Fock minor}
\label{app:bargmann_wronskian}

In this appendix, we derive a closed form for the last nonvanishing Fock minor of a finite-rank state, i.e., the minor $M_r$ whose order equals the rank $r$ of the state.
We show that it factorizes into a spectral prefactor, a Gaussian, and the squared modulus of the Wronskian of the stellar functions of the eigenstates, which proves Eq.~\eqref{eq:terminal_minor_stellar_wronskian}.

Let $\hat\rho$ be a rank-$r$ state with spectral decomposition $\hat{\rho}=\sum_{k=0}^{r-1}\lambda_k\ket{\psi_k}\!\bra{\psi_k}$, with $\lambda_k>0$.
We denote by $\psi_k(z)\coloneqq\bra{0}e^{z\hat{a}}\ket{\psi_k}$ the Bargmann representation of $\ket{\psi_k}$, and define
\begin{align}
    \mathcal W_\rho(z)
    \coloneqq&
    \operatorname{Wr}[\psi_0,\ldots,\psi_{r-1}](z)
    \\=&
    \det\!\left[
        \partial_z^i\psi_j(z)
    \right]_{0\leq i,j<r}.
\end{align}

Let $V_{ij}(\alpha)\coloneqq\bra{i}\hat D^\dagger(\alpha)\ket{\psi_j}$ for $0\leq i,j<r$.
Inserting the spectral decomposition of $\hat\rho$, the leading $r\times r$ block of the displaced Fock matrix factorizes as
\begin{align}
    \left[
        \bra{i}
        \hat D^\dagger(\alpha)\hat\rho\hat D(\alpha)
        \ket{k}
    \right]_{0\leq i,k<r}
    =
    V(\alpha)\Lambda V(\alpha)^\dagger,
\end{align}
where $\Lambda=\operatorname{diag}(\lambda_0,\ldots,\lambda_{r-1})$.
Therefore
\begin{align}
    M_{r}(\alpha)
    =
    \left(
        \prod_{j=0}^{r-1}\lambda_j
    \right)
    |\det V(\alpha)|^2 .
    \label{eq:terminal_minor_gram}
\end{align}

It remains to compute $\det V(\alpha)$. 
Using the normally ordered form of the displacement operator, one obtains
\begin{align}
    \bra{0}e^{z\hat a}
    \hat D^\dagger(\alpha)
    \ket{\psi_j}
    =
    e^{-|\alpha|^2/2}
    e^{-\alpha z}
    \psi_j(z+\alpha^*).
\end{align}
Since
\begin{align}
    V_{ij}(\alpha)
    =
    \frac{1}{\sqrt{i!}}
    \left.
    \partial_z^i
    \left(
        e^{-|\alpha|^2/2}
        e^{-\alpha z}
        \psi_j(z+\alpha^*)
    \right)
    \right|_{z=0},
\end{align}
we find
\begin{align}
    \det V(\alpha)
    =
    \frac{e^{-r|\alpha|^2/2}}
    {\sqrt{\prod_{k=0}^{r-1}k!}}\,
    \mathcal W_\rho(\alpha^*).
    \label{eq:detV_wronskian}
\end{align}
Here we used the multiplicative Wronskian identity~\cite[Eq.~(14)]{Braunstein1998}, $\operatorname{Wr}[g\phi_0,\ldots,g\phi_{r-1}]=g^r\operatorname{Wr}[\phi_0,\ldots,\phi_{r-1}]$, followed by evaluation at $z=0$.

Combining Eqs.~\eqref{eq:terminal_minor_gram} and~\eqref{eq:detV_wronskian} gives
\begin{align}
    M_{r}(\alpha)
    =
    \left(
        \prod_{k=0}^{r-1}
        \frac{\lambda_k}{k!}
    \right)
    e^{-r|\alpha|^2}
    \left|
        \mathcal W_\rho(\alpha^*)
    \right|^2,
\end{align}
which is Eq.~\eqref{eq:terminal_minor_stellar_wronskian}.

Under a change of orthonormal basis of $\operatorname{supp}(\hat\rho)$, represented by a unitary matrix
$U$, the Wronskian matrix is simply multiplied by $U$.
Consequently, $\mathcal W_\rho$ is multiplied by $\det U$, which is a constant phase that drops out upon taking the modulus squared.
Thus, apart from the factor $\prod_j\lambda_j$, the Fock minor associated with the rank of $\hat{\rho}$ is solely determined by its support. 
\end{document}